\let\cite\citep
\newtheorem{Theorem}{Theorem} 
\newtheorem{Lemma}{Lemma}
\newtheorem{Corollary}{Corollary}
\newtheorem{Definition}{Definition}
\newtheorem{Proposition}{Proposition}
\newtheorem{Problem}{Problem}
\DeclareMathOperator{\lca}{lca}
\newcommand{\PROBLEM}[1]{\textsc{#1}}
\newcommand{\G}{G}
\newcommand{\child}{\mathsf{child}}
\newcommand{\parent}{\mathsf{par}}
\newcommand{\AX}[1]{\textnormal{#1}}
\newcommand{\hourglass}{\mathrel{\text{\ooalign{$\searrow$\cr$\nearrow$}}}}
\providecommand{\keywords}[1]{\textbf{\textit{Keywords: }} #1}
\title{Arc-Completion of 2-Colored Best Match Graphs to
  Binary-Explainable Best Match Graphs}
\author[1,2]{David Schaller}
\author[3]{Manuela Gei{\ss}}
\author[4]{Marc Hellmuth}
\author[2,1,5-8]{Peter F.\ Stadler}
\affil[1]{Max Planck Institute for Mathematics in the Sciences,
          Inselstra{\ss}e 22, D-04103 Leipzig, Germany
          \authorcr \texttt{sdavid@bioinf.uni-leipzig.de}}
\affil[2]{Bioinformatics Group, Department of Computer Science \&
          Interdisciplinary Center for Bioinformatics, Universit{\"a}t Leipzig,
          H{\"a}rtelstra{\ss}e~16--18, D-04107 Leipzig, Germany
          \authorcr \texttt{studla@bioinf.uni-leipzig.de}}
\affil[3]{Software Competence Center Hagenberg GmbH, Hagenberg, Austria
          \authorcr \texttt{manuela.geiss@scch.at}}
\affil[4]{Department of Mathematics, Faculty of Science,
          Stockholm University, SE-10691 Stockholm, Sweden
          \authorcr \texttt{marc.hellmuth@math.su.se}}
\affil[5]{German Centre for Integrative Biodiversity
  Research (iDiv) Halle-Jena-Leipzig; Competence Center for Scalable Data
  Services and Solutions; Leipzig Research Center for Civilization
  Diseases; and Leipzig Research Center for Civilization Diseases (LIFE),
  Leipzig University, D-04103 Leipzig, Germany}
\affil[6]{Institute for Theoretical Chemistry, University of Vienna,
          W{\"a}hringerstrasse 17, A-1090 Wien, Austria}
\affil[7]{Facultad de Ciencias, Universidad National de Colombia, Sede
          Bogot{\'a}, Colombia}
\affil[8]{Santa Fe Insitute, 1399 Hyde Park Rd., Santa Fe NM 87501, USA}
\date{\ }
\begin{document}

\maketitle 

\abstract{  
  Best match graphs (BMGs) are vertex-colored digraphs that
  naturally arise in mathematical phylogenetics to formalize the notion of
  evolutionary closest genes w.r.t.\ an \textit{a priori} unknown
  phylogenetic tree.  BMGs are explained by unique least resolved trees. We
  prove that the property of a rooted, leaf-colored tree to be least
  resolved for \emph{some} BMG is preserved by the contraction of inner
  edges. For the special case of two-colored BMGs, this leads to a
  characterization of the least resolved trees (LRTs) of binary-explainable
  trees and a simple, polynomial-time algorithm for the minimum cardinality
  completion of the arc set of a BMG to reach a BMG that can be explained
  by a binary tree.
}

\bigskip
\noindent
\keywords{best matches, least resolved trees, graph completion, 
polynomial-time algorithm}

\sloppy

\section{Introduction}

Best match graphs (BMGs) are vertex-colored digraphs that appear in
mathematical phylogenetics as a repesentation of a gene's evolutionary
closest relatives in another species \cite{Geiss:19a}. That is, given a
rooted tree $T$, a vertex (gene) $x$ in the BMG $\G(T,\sigma)$ is colored
by the species $\sigma(x)$ in which it resides, and there is an arc $(x,y)$
if there is no other gene $y'$ in species
$\sigma(y')=\sigma(y)\ne\sigma(x)$ with a later last common ancestor than
the last common ancestor $\lca_T(x,y)$ of $x$ and $y$ in $T$.  Although
rooted trees are crucial for the definition of BMGs, they are, however,
unknown in practice and we are often only left with estimates of their
BMGs.  In general, there are multiple trees that ``explain'' the same BMG.
There is, however, a unique least resolved tree (LRT) for each BMG, which
can be obtained from $T$ by contracting certain edges \cite{Geiss:19a}. The
LRTs will play a central role in this contribution. The subgraph of a BMG
induced by the vertices of some subset of colors is again a BMG. Every BMG
therefore can be viewed as the disjoint union of (the arc sets of)
2-colored BMGs.  These 2-BMGs
\cite{Geiss:19a,Korchmaros:20a,Korchmaros:20b} are bipartite and form a
common subclass of the \emph{sink-free} digraphs \cite{Cohn:02,Abrams:10}
and the \emph{bi-transitive} digraphs \cite{Das:20}.

Estimates of graphs from real-life data tend to be affected by noise and
thus typically will violate the defining properties of the desired graph
class. The solution of a corresponding graph modification problem
\cite{Natanzon:2001} therefore can by employed as a means of noise
reduction, see e.g.\ \cite{Hellmuth:15a}. The arc modification problems
(deletion, completion, and editing) for BMGs are NP-complete in general
\cite{Schaller:21b}, and remain hard even for the special case of 2 colors.

Phylogenetic trees are often considered to be binary in theory. Most
polytomies are therefore considered a limitation of the available data or
method of tree reconstruction \cite{Maddison:89,DeSalle:94} rather than a
biological reality \cite{Hoelzer:94,Slowinski:01}. In the setting of BMGs,
this distinction is important because not all BMGs can be derived from
binary gene trees. Instead, \emph{binary-explainable} BMGs (beBMGs) form a
proper subclass \cite{Schaller:21c} that is distinguished by a single
forbidden induced subgraph, the \emph{hourglass}, from other BMGs
\cite{Schaller:21a}. The arc modification problems for beBMGs are
NP-complete \cite{Schaller:21b,Schaller:21c} as well.

In the context of correcting empirical best match data, it is natural to
ask whether the problem of modifying a BMG to a beBMG is as difficult as
the general case. It is, in fact, not unusual that graph modification
problems that are hard in general become easy when the input is confined to
a -- usually restrictive -- class of graphs, see e.g.\
\cite{Liu:11,Gao:13}. Here we show that the problem of completing a
2-colored BMG to a beBMG can indeed be solved in polynomial time.

To prove this result we make use of the fact that every BMG is associated
with a unique \emph{least resolved tree} (LRT).
Thm.~\ref{thm:edge-contr-supergraph} shows that the property of being the
LRT for some BMG is preserved under contraction of inner edges. This
observation leads to the explicit construction of a ``collapsed tree''
from the LRT of the input BMG $(G,\sigma)$ which not only is the LRT of a
2-colored beBMG but also minimizes the number of arcs that need to be
inserted to obtain a beBMG from $(G,\sigma)$. The construction does not
generalize to more than $2$ colors.

\section{Notation}

We consider simple directed graphs (digraphs) $\G=(V,E)$ with vertex set
$V$ and \emph{arc} set
$E\subseteq V\times V \setminus \{(v,v)\mid v\in V\}$ and rooted
(undirected) trees $T$ with root $\rho$.  Correspondingly, we write $(x,y)$
for directed arcs from $x$ to $y$, and $xy$ for undirected tree edges.
Given a tree $T$, we write $V(T)$ and $E(T)$ for its set of vertices and
edges, resp., $L(T)$ for the set of leaves, and $V^0(T)=V(T)\setminus L(T)$
for the set of inner vertices.

A vertex coloring of a graph is a map $\sigma:V\to S$, where $S$ is
a non-empty set of colors.  A vertex coloring of $\G$ is \emph{proper} if
$\sigma(x)\ne\sigma(y)$ for all $(x,y)\in E(\G)$.
We will also consider \emph{leaf-colorings} $\sigma\colon L(T)\to S$ for 
trees $T$, and denote by $(\G,\sigma)$ and $(T,\sigma)$ vertex-colored graphs 
and leaf-colored trees, respectively.

Given a rooted tree, we write $x \preceq_T y$ if $y$ is an \emph{ancestor}
of $x$, i.e., if $y$ lies along the unique path from $\rho$ to $x$ in $T$.
We write $x \prec_T y$ if $x \preceq_T y$ and $x\ne y$. The relation
$\preceq_T$ is a partial order on $T$. If $xy\in E(T)$ and $x\prec_T y$,
then $y$ is the unique \emph{parent} of $x$, denoted by $\parent_T(x)$, and
$x$ a \emph{child} of $y$. The set of children of a vertex $u\in V(T)$ is
denoted by $\child_T(u)$.  A rooted tree $T$ is phylogenetic if every inner
vertex $x\in V^0(T)$ has at least two children. All trees in this
contribution are assumed to be phylogenetic.  Furthermore, we write $T(u)$
for the subtree rooted in $u$, i.e.,
$V(T(u))=\{y\in V(T) \mid y\preceq_T u\}$. The \emph{last common ancestor}
of a non-empty subset $A\subseteq V(T)$ is the unique $\preceq_T$-minimal
vertex of $T$ that is an ancestor of every $u\in A$. For convenience, we
write $\lca(x,y,\dots)$ instead of $\lca(\{x,y,\dots\})$.

A triple $xy|z$ is a rooted tree with the three leaves $x$, $y$, and $z$
such that $\lca(x,y)\prec \lca(x,y,z)$. If $e\in E(T)$, we denote by $T_e$
the tree obtained by contracting the edge $e$. We will only be
interested in contractions of inner edges, i.e., those that preserve the
leaf set. We say that $T$ \emph{displays} a tree $T'$, in symbols
$T'\le T$, if $T'$ can be obtained from $T$ as the minimal subtree of $T$
that connects all elements in $L(T')$ with root $\lca_{T}(L(T'))$ and by
suppressing all inner vertices that only have one child left.

\section{Best Match Graphs, Least Resolved Trees, and Binary-Explainable
  BMGs}

In this section, we first summarize some properties of best match graphs and
their least resolved trees. We then show that the contraction of inner
edges in least resolved trees always leads to least resolved
trees. Furthermore, we recall some properties of binary-explainable best
match graphs that will be needed later.

\begin{Definition}
  Let $(T,\sigma)$ be a leaf-colored tree. A leaf $y\in L(T)$ is a
  \emph{best match} of the leaf $x\in L(T)$ if $\sigma(x)\neq\sigma(y)$ and
  $\lca(x,y)\preceq_T \lca(x,y')$ holds for all leaves $y'$ of color
  $\sigma(y')=\sigma(y)$.
  \label{def:BMG}
\end{Definition}
Given $(T,\sigma)$, the graph $\G(T,\sigma) = (V,E)$ with vertex set
$V=L(T)$, vertex-coloring $\sigma$, and with arcs $(x,y)\in E$ if and only
if $y$ is a best match of $x$ w.r.t.\ $(T,\sigma)$ is called the \emph{best
  match graph} (BMG) of $(T,\sigma)$ \cite{Geiss:19a}:

\begin{Definition}\label{def:BestMatchGraph}
  An arbitrary vertex-colored graph $(\G,\sigma)$ is a \emph{best match
    graph (BMG)} if there exists a leaf-colored tree $(T,\sigma)$ such that
  $(\G,\sigma) = \G(T,\sigma)$. In this case, we say that $(T,\sigma)$
  \emph{explains} $(\G,\sigma)$.
\end{Definition}

\begin{Proposition} \cite[Lemma~8]{Schaller:21a}
  If $T_A$ is obtained from a tree $T$ by contracting all edges in a subset
  $A$ of inner edges in $T$, then $\G(T,\sigma)\subseteq \G(T_A,\sigma)$.
  \label{prop:TA}
\end{Proposition}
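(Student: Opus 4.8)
The plan is to show that contraction never destroys a best-match arc, i.e.\ every arc of $\G(T,\sigma)$ persists in $\G(T_A,\sigma)$. The whole argument rests on tracking how the operator $\lca$ behaves under contraction, and this is most cleanly organized through the canonical surjection $\varphi\colon V(T)\to V(T_A)$ that collapses each connected component of the subgraph $(V(T),A)$ to a single vertex and fixes every remaining vertex. Since $A$ consists of inner edges only, $\varphi$ restricts to the identity on $L(T)=L(T_A)$ and leaves $\sigma$ unchanged; in particular the color condition $\sigma(x)\neq\sigma(y)$ required by Def.~\ref{def:BMG} transfers verbatim from $(T,\sigma)$ to $(T_A,\sigma)$. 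It therefore remains only to control the $\lca$-comparisons.

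Concretely, I would isolate two properties of $\varphi$: \textbf{(i)} $\varphi$ is order-preserving, i.e.\ $a\preceq_T b$ implies $\varphi(a)\preceq_{T_A}\varphi(b)$; and \textbf{(ii)} $\varphi$ commutes with $\lca$ on leaves, i.e.\ $\lca_{T_A}(a,b)=\varphi(\lca_T(a,b))$ for all $a,b\in L(T)$. Property (i) is immediate, because contracting an edge only identifies two vertices that were already $\preceq_T$-comparable and hence can never reverse an existing comparability. Granting (i) and (ii), the conclusion is essentially a one-line computation. Let $(x,y)$ be an arc of $\G(T,\sigma)$. By Def.~\ref{def:BMG} we have $\lca_T(x,y)\preceq_T\lca_T(x,y')$ for every leaf $y'$ with $\sigma(y')=\sigma(y)$. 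Applying $\varphi$ and invoking first (i) and then (ii) gives $\lca_{T_A}(x,y)=\varphi(\lca_T(x,y))\preceq_{T_A}\varphi(\lca_T(x,y'))=\lca_{T_A}(x,y')$ for all such $y'$. Together with $\sigma(x)\neq\sigma(y)$ this says that $y$ is a best match of $x$ in $(T_A,\sigma)$, i.e.\ $(x,y)\in E(\G(T_A,\sigma))$, which is exactly the desired inclusion $\G(T,\sigma)\subseteq\G(T_A,\sigma)$.

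The main obstacle is property (ii), and I would establish it by reducing to a single edge and then inducting on $|A|$. The reduction is justified because $T_A$ is realized by contracting the members of $A$ one at a time, after each contraction the edges of $A$ that remain are still inner edges, and arc-set inclusion composes transitively; so it suffices to prove (ii) when a single inner edge $e=uv$ with $v=\parent_T(u)$ is collapsed to a merged vertex $w$. Here a short case analysis on $m\coloneqq\lca_T(a,b)$ does the work: if $m\notin\{u,v\}$ then $m$ survives intact in $T_A$ and is still the last common ancestor, so $\lca_{T_A}(a,b)=m=\varphi(m)$; if $m\in\{u,v\}$ then, since the children of $w$ in $T_A$ are exactly the children of $u$ together with the children of $v$ other than $u$, one checks that $a$ and $b$ lie in distinct child-subtrees of $w$, whence $\lca_{T_A}(a,b)=w=\varphi(m)$. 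Verifying this handful of subcases (and the bookkeeping that the remaining edges stay inner, so the induction hypothesis applies) is the only place where any genuine work occurs; everything else is formal manipulation of the order-preserving map $\varphi$.
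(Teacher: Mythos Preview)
Your argument is correct. The contraction map $\varphi$ is order-preserving and commutes with $\lca$ on leaves, and from these two facts the inclusion $\G(T,\sigma)\subseteq\G(T_A,\sigma)$ follows immediately; your single-edge case analysis for property~(ii) is clean and covers all situations. One cosmetic remark: the paper's convention for an edge $uv$ is $v\prec_T u$, i.e.\ $u$ is the parent, so your parametrization $v=\parent_T(u)$ reverses the roles---harmless, but worth aligning.

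As for comparison: the present paper does \emph{not} prove Prop.~\ref{prop:TA}; it is quoted as Lemma~8 of \cite{Schaller:21a} and used as a black box. There is therefore nothing in this paper to compare your proof against. Your approach---track $\lca$ through the quotient map and reduce to a single edge contraction---is the natural and standard one, and is in substance what the cited source does as well.
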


An edge $e$ of a leaf-colored tree is \emph{redundant} (w.r.t.\
$(\G,\sigma)$) if it can be contracted without affecting the BMG, i.e., if
$\G(T,\sigma)=\G(T_e,\sigma)$.

\begin{Definition}
  A leaf-colored tree $(T,\sigma)$ is \emph{least resolved} if there is no
  non-empty subset $A\subseteq E(T)$ such that
  $\G(T,\sigma)=\G(T_A,\sigma)$.
\end{Definition}
We define the notion of being least resolved here as a property of the tree
$(T,\sigma)$ alone. Of course, every least resolved tree is also
\emph{least resolved w.r.t.\ some BMG}, namely the (uniquely defined)
graph $\G(T,\sigma)$.

It is shown in \cite{Geiss:19a}
that $(T,\sigma)$ is least resolved if and only if it does not contain a
redundant edge. In particular, we have 
\begin{Proposition} \cite[Thm.~8]{Geiss:19a}
  Every BMG $(\G,\sigma )$ is explained by a unique least resolved tree
  (LRT), which is obtained from an arbitrary tree $(T,\sigma)$ that explains 
  $(\G,\sigma)$ by contraction of all redundant edges of $(T,\sigma)$.
  \label{prop:LRTuniq}
\end{Proposition}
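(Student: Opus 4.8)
The plan is to split the statement into an existence claim (some least resolved tree explains $(\G,\sigma)$, and it is obtained by contracting redundant edges) and a uniqueness claim (it is the only one). Throughout I would lean on the equivalence recalled just above the statement: a leaf-colored tree is least resolved if and only if it contains no redundant edge.

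Existence is the easy half. Starting from any tree $(T,\sigma)$ explaining $(\G,\sigma)$, I would contract a redundant edge whenever one exists. Each such contraction leaves the BMG unchanged by the definition of redundancy, and the number of edges strictly decreases, so the process terminates at a tree $(T^*,\sigma)$ that still satisfies $\G(T^*,\sigma)=\G(T,\sigma)$ and has no redundant edge; by the quoted equivalence $(T^*,\sigma)$ is least resolved. That $T^*$ coincides with the tree obtained by contracting the set of \emph{all} edges redundant in the original $T$ --- rather than re-examining redundancy after each step --- is a separate confluence point, addressed together with uniqueness below; note that the inclusion $\G(T,\sigma)\subseteq\G(T^*,\sigma)$ furnished by Proposition~\ref{prop:TA} already guarantees that no required arc is ever lost along the way.

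Uniqueness is where the real work lies, and the route I would take is to exhibit the LRT as a BMG-invariant via the standard Aho/BUILD machinery. From $(\G,\sigma)$ one reads off its informative triples --- those $ac|b$ with $\sigma(a)=\sigma(b)\neq\sigma(c)$ that are forced by the presence of the arc $(c,a)$ together with the absence of $(c,b)$ --- and every tree explaining $(\G,\sigma)$ must display all of them, since such a triple says exactly that $a$ is a best match of $c$ while $b$ is not. Running the Aho/BUILD algorithm on these triples produces a canonical tree $T_{\mathrm{Aho}}$ that is displayed by every explaining tree; I would then argue that the LRT is precisely the tree obtained from $T_{\mathrm{Aho}}$ by contracting its redundant edges. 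Combined with the observation that any explaining tree displays $T_{\mathrm{Aho}}$ and reduces, under contraction of redundant edges, to the same object, this makes the LRT independent of the chosen starting tree and simultaneously pins down the $T^*$ of the existence step.

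The step I expect to be the main obstacle is exactly this confluence/canonicity: proving that contraction of redundant edges is order-independent and yields the same tree from every explaining starting tree, equivalently that the ``redundant'' relation is robust enough that removing all of it leaves a well-defined, minimally resolved tree. The technical heart is showing that distinct redundant-edge contractions cannot interfere --- that contracting one redundant edge neither destroys the redundancy of another nor manufactures a spurious new one --- so that the redundant edges may be deleted simultaneously. Verifying that the recursive subproblems, namely the restrictions of $(\G,\sigma)$ to the leaf blocks produced by $T_{\mathrm{Aho}}$, are themselves BMGs to which an inductive argument applies is the remaining bookkeeping.
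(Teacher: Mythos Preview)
The paper does not prove this proposition at all: it is imported verbatim from \cite[Thm.~8]{Geiss:19a} and used as a black box, so there is no in-paper argument to compare your proposal against.

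That said, your outline is broadly in the spirit of how the cited result is established. The existence half is exactly as you describe. For uniqueness, the original source does proceed via the set of informative triples $\mathscr{R}(\G,\sigma)$ and the Aho/BUILD construction; one shows that every tree explaining $(\G,\sigma)$ displays $\Aho(\mathscr{R}(\G,\sigma))$, and that in fact this Aho tree already \emph{is} the least resolved tree---no further contraction of redundant edges from $T_{\mathrm{Aho}}$ is needed. Your formulation ``the LRT is $T_{\mathrm{Aho}}$ with its redundant edges contracted'' is therefore slightly weaker than what actually holds and would leave you with an unnecessary extra step; the cleaner statement, which also dispatches your confluence worry in one stroke, is that $\Aho(\mathscr{R}(\G,\sigma))$ itself has no redundant edges and is displayed by every explaining tree, so any sequence of redundant-edge contractions from any starting tree must terminate at it.

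The part you flag as the main obstacle---order-independence of redundant-edge contraction---is thus not handled in the original by a local non-interference argument of the kind you sketch, but rather bypassed entirely by identifying the LRT with a graph-determined object. Your approach via a direct confluence check would work in principle but is harder to carry out than the triple-based route.
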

In particular, therefore, there is a bijection between BMGs and LRTs.
Surprisingly, the property of being least resolved for some BMG is
preserved under contraction of inner edges of $T$. 

\begin{Theorem}
  \label{thm:edge-contr-supergraph}
  Suppose $(T,\sigma)$ is least resolved 
  and let $A$ be a set of inner
  edges of $T$, and denote by $T_A$ the tree obtained from a tree $T$ by
  contracting all edges in $A$. Then $(T_A,\sigma)$ is again least
  resolved. 
\end{Theorem}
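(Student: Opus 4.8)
The plan is to prove the contrapositive at the level of single edges. Recall from \cite{Geiss:19a} that a leaf-colored tree is least resolved if and only if it contains no redundant edge; hence it suffices to show that \emph{every} inner edge of $T_A$ is non-redundant, given that every inner edge of $T$ is. Fix an arbitrary inner edge $g$ of $T_A$. Since contracting the edges in $A$ neither creates nor splits edges outside $A$, $g$ is the image of a unique edge $f=uv\in E(T)\setminus A$ with $v=\parent_T(u)$ (which is again an inner edge of $T$). Because $f\notin A$ and the only edge of $T$ above $u$ is $f$ itself, $u$ is the $\preceq_T$-maximal vertex of its contraction class, so in $T_A$ we may write $g=[u][v]$ with $[v]=\parent_{T_A}([u])$; moreover $(T_A)_g=T_{A\cup\{f\}}$. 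The goal is to exhibit an arc that is gained when $g$ is contracted in $T_A$, certifying that $g$ is non-redundant.

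First I would record two structural facts. (a) Contraction commutes with last common ancestors: for all leaves $a,b$ one has $\lca_{T_A}(a,b)=[\lca_T(a,b)]$, since the quotient map is a monotone surjection onto $V(T_A)$. (b) A local description of the arcs gained by a \emph{single} contraction, which is the heart of the argument: if $e=pq\in E(T)$ is an inner edge with $q=\parent_T(p)$ and $(x,y)\in\G(T_e,\sigma)\setminus\G(T,\sigma)$, then $x\preceq_T p$ and $\lca_T(x,y)=q$. The proof is a short case analysis on the position of $\lca_T(x,y)$: contracting $e$ alters $\lca_T(\cdot,\cdot)$ only for pairs whose last common ancestor is $p$ or $q$, merging both to the new vertex. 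If $\lca_T(x,y)$ lies strictly below $q$ (in particular if it equals $p$) or strictly above $q$, then every leaf that was strictly closer to $x$ than $y$ in $T$ stays at least as close in $T_e$, so no arc into $y$ can appear; the only surviving possibility is $\lca_T(x,y)=q$ with $x\preceq_T p$. I expect (b) to be the main obstacle, since it requires carefully tracking which competitors of $y$ are ``lifted'' by the contraction and which are not.

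With (b) in hand the argument is quick. Since $T$ is least resolved, $f$ is non-redundant, so there is an arc $(x,y)\in\G(T_f,\sigma)\setminus\G(T,\sigma)$. By (b), $x\preceq_T u$ and $\lca_T(x,y)=v$. As $(x,y)\notin\G(T,\sigma)$, the leaf $y$ is not a best match of $x$ in $T$, so some leaf $y^{*}$ of color $\sigma(y)$ satisfies $\lca_T(x,y^{*})\prec_T\lca_T(x,y)=v$; because $x\preceq_T u\prec_T v$, this forces $\lca_T(x,y^{*})\preceq_T u$.

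Finally I would transfer this witness to $T_A$ using (a). We obtain $\lca_{T_A}(x,y)=[v]$ and $\lca_{T_A}(x,y^{*})=[\lca_T(x,y^{*})]\preceq[u]\prec[v]$, so $y$ is not a best match of $x$ in $T_A$ and hence $(x,y)\notin\G(T_A,\sigma)$. On the other hand, $(x,y)\in\G(T_f,\sigma)\subseteq\G(T_{A\cup\{f\}},\sigma)=\G((T_A)_g,\sigma)$ by Proposition~\ref{prop:TA}. Thus $(x,y)\in\G((T_A)_g,\sigma)\setminus\G(T_A,\sigma)$, so contracting $g$ strictly enlarges the BMG and $g$ is non-redundant in $T_A$. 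As $g$ was arbitrary, $T_A$ has no redundant edge and is therefore least resolved.
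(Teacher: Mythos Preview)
Your argument is correct. Both proofs hinge on the same phenomenon---a witness arc whose last common ancestor sits at the lower endpoint of the edge in question, together with a same-colored leaf just outside that subtree---but you package it differently from the paper. The paper invokes an external characterization of non-redundant edges (Lemma~7 of \cite{Schaller:21a}) and then shows, one contraction at a time, that the witness for each surviving edge still satisfies that characterization after contracting some other edge; the full statement follows by iteration. You instead work globally: you fix an arbitrary inner edge $g$ of $T_A$, pull it back to $f$ in $T$, take any arc $(x,y)$ created by contracting $f$, prove your structural claim~(b) from scratch to locate $x$, $y$, and a closer competitor $y^{*}$ relative to $f$, and then push everything forward to $T_A$ via the quotient map in one step. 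Your route is more self-contained (no appeal to the cited lemma) and avoids the induction over $|A|$, at the cost of having to establish~(b); the paper's route is shorter once the cited characterization is granted. Claim~(b) is in fact easy: taking $y^{*}$ to be a best match of $x$ of color $\sigma(y)$ in $T$, the equalities $[\lca_T(x,y)]=[\lca_T(x,y^{*})]$ forced by $(x,y)\in\G(T_e,\sigma)$ together with $\lca_T(x,y^{*})\prec_T\lca_T(x,y)$ pin these two lcas to $p$ and $q$ respectively, which is exactly what you need.
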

\begin{proof}
  Assume that $(T,\sigma)$ is least resolved, i.e., it does not contain any
  redundant edges, and set $(\G,\sigma)\coloneqq\G(T,\sigma)$.  Lemma~7 in
  \cite{Schaller:21a} states that an inner edge $e=uv$ with $v\prec_{T} u$
  in $(T,\sigma)$ is non-redundant if and only if there is an arc
  $(a,b)\in E(\G)$ such that $\lca_{T}(a,b)=v$ and
  $\sigma(b)\in \sigma( L(T(u)) \setminus L(T(v)) )$.  The statement
  trivially holds if $(T,\sigma)$ has at most one inner edge. Hence, we
  assume that $(T,\sigma)$ has at least two distinct inner edges $e=uv$
  and $e'$. We show that every non-redundant edge $e$ in $T$ remains 
  non-redundant in $T_{e'}$. Thus, let $e$ be a non-redundant edge in $T$. 
  Hence, there is an arc $(a,b)\in E(\G)$ such that $\lca_{T}(a,b)=v$ and
  $\sigma(b)\in \sigma( L(T(u)) \setminus L(T(v)) )$. Now consider the tree
  $T_{e'}$ obtained from $T$ by contraction of the inner edge $e'\ne e$.
  Clearly, we also have $\lca_{T_{e'}}(a,b)=v$ and
  $\sigma(b)\in \sigma( L(T_{e'}(u)) \setminus L(T_{e'}(v))
  )$. Prop.~\ref{prop:TA} implies $\G(T,\sigma)\subseteq\G(T_{e'},\sigma)$,
  and thus, $(a,b)\in E(\G(T_{e'},\sigma))$.  Making again use of the
  characterization of redundant edges in \cite[Lemma~7]{Schaller:21a}, we
  conclude that $e$ is non-redundant in $(T_{e'},\sigma)$. 
  
  Since both $e$ and $e'$ were chosen arbitrarily, we observe that the
  contraction of a single inner edge does not produce new redundant
  edges. We can therefore apply this argument for each step in the consecutive
  contraction of all edges in $A$ (in an arbitrary order) to conclude that
  $(T_A,\sigma)$ does not contain redundant edges. Therefore, 
  Prop.~\ref{prop:LRTuniq} implies that $(T_A,\sigma)$
  is least resolved. 
\end{proof}

\begin{Corollary}
  If $(T,\sigma)$ is least resolved and $A$ is a non-empty set
  of inner edges of $T$, then $\G(T,\sigma)\subsetneq\G(T_A,\sigma)$.
\end{Corollary}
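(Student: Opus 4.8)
The plan is to combine the inclusion already guaranteed by Proposition~\ref{prop:TA} with a uniqueness argument that rules out equality of the two BMGs. Since Proposition~\ref{prop:TA} yields $\G(T,\sigma)\subseteq\G(T_A,\sigma)$ as soon as $A$ is a set of inner edges, the entire task reduces to showing $\G(T,\sigma)\neq\G(T_A,\sigma)$ whenever $A\neq\emptyset$.

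First I would apply Theorem~\ref{thm:edge-contr-supergraph}: because $(T,\sigma)$ is least resolved and $A$ consists of inner edges, the contracted tree $(T_A,\sigma)$ is least resolved as well. Hence both $(T,\sigma)$ and $(T_A,\sigma)$ are least resolved trees, each being the LRT of its own BMG in the sense of Proposition~\ref{prop:LRTuniq}.

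I would then argue by contradiction. Suppose $\G(T,\sigma)=\G(T_A,\sigma)$. Then $(T,\sigma)$ and $(T_A,\sigma)$ are two least resolved trees explaining the same BMG, so by the uniqueness of the LRT in Proposition~\ref{prop:LRTuniq} (equivalently, by the bijection between BMGs and LRTs noted thereafter) we must have $T=T_A$. This contradicts $A\neq\emptyset$, because contracting even a single inner edge removes an inner vertex and thus yields a tree strictly different from $T$. Therefore $\G(T,\sigma)\neq\G(T_A,\sigma)$, and combining this with the inclusion from Proposition~\ref{prop:TA} gives $\G(T,\sigma)\subsetneq\G(T_A,\sigma)$.

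I anticipate no real obstacle here, since the substantive content lives in Theorem~\ref{thm:edge-contr-supergraph} (preservation of the least-resolved property) and in the uniqueness of LRTs; the argument merely stitches these together. The only elementary point to verify is that a non-empty contraction set genuinely produces a tree different from $T$, which is immediate from the fact that edge contraction strictly decreases the number of inner vertices.
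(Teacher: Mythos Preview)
Your argument is correct and follows essentially the same route as the paper: use Proposition~\ref{prop:TA} for the inclusion, Theorem~\ref{thm:edge-contr-supergraph} to ensure $(T_A,\sigma)$ is least resolved, and then invoke the uniqueness of LRTs (Proposition~\ref{prop:LRTuniq}) to rule out equality. The only difference is that you make the contradiction with $T\neq T_A$ explicit, whereas the paper leaves this step implicit.
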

\begin{proof}
  By Prop.~\ref{prop:TA}, we have $\G(T,\sigma)\subseteq\G(T_A,\sigma)$. By 
  Thm.~\ref{thm:edge-contr-supergraph}, $(T_A,\sigma)$ is least resolved.
  Since the LRT of a BMG is unique (cf.\ Prop.~\ref{prop:LRTuniq}), we have
  $\G(T,\sigma)\neq\G(T_A,\sigma)$.
\end{proof}
As another immediate consequence of Thm.~\ref{thm:edge-contr-supergraph}
and uniqueness of the LRT of a BMG (Prop.~\ref{prop:LRTuniq}), we obtain
\begin{Corollary}
  If $e$ and $e'$ are two distinct inner edges of a least resolved tree
  $(T,\sigma)$, then $\G(T_{e},\sigma)\ne\G(T_{e'},\sigma)$.
\end{Corollary}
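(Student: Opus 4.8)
The plan is to exploit the bijection between BMGs and their least resolved trees. By Thm.~\ref{thm:edge-contr-supergraph}, contracting a single inner edge of the least resolved tree $(T,\sigma)$ yields again a least resolved tree; hence both $(T_e,\sigma)$ and $(T_{e'},\sigma)$ are LRTs. By Prop.~\ref{prop:LRTuniq}, the map sending an LRT to the BMG it explains is injective. It therefore suffices to show that $(T_e,\sigma)$ and $(T_{e'},\sigma)$ are distinct as leaf-colored trees, since then they cannot explain the same BMG.

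Since the leaf set $L(T)$ and the coloring $\sigma$ are unchanged by contraction of inner edges, the two trees are distinct precisely when their underlying tree structures differ. I would describe a phylogenetic tree through the family of clusters $\{L(T(w))\mid w\in V^0(T)\}$, i.e., the leaf sets of the subtrees rooted at its inner vertices; two phylogenetic trees on the same leaf set coincide if and only if these families coincide. Next I would pin down the effect of a single contraction on this family. Writing $e=uv$ with $v\prec_T u$, contracting $e$ deletes the vertex $v$ and reattaches its children to $u$, while leaving $L(T(w))$ unchanged for every inner vertex $w\ne v$. Consequently the cluster family of $(T_e,\sigma)$ is exactly that of $(T,\sigma)$ with the single cluster $L(T(v))$ removed.

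The one point requiring care is that distinct inner edges are removed at distinct clusters. Distinct edges $e=uv$ and $e'=u'v'$ have distinct child-endpoints $v\ne v'$: if $v=v'$ then, since each non-root vertex has a unique parent, $u=u'$ and hence $e=e'$. Moreover, in a phylogenetic tree distinct inner vertices determine distinct clusters: incomparable vertices have disjoint nonempty subtrees, while $v\prec_T v'$ forces $L(T(v))\subsetneq L(T(v'))$ because $v'$ has a second child off the path to $v$. Thus $L(T(v))\ne L(T(v'))$.

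It then follows that the cluster families of $(T_e,\sigma)$ and $(T_{e'},\sigma)$ differ---the former retains $L(T(v'))$ but not $L(T(v))$, the latter vice versa---so the two trees are distinct leaf-colored trees. Being distinct LRTs, they explain distinct BMGs by Prop.~\ref{prop:LRTuniq}, i.e., $\G(T_e,\sigma)\ne\G(T_{e'},\sigma)$. I expect no substantial obstacle here; beyond the cited results, the entire content is the bookkeeping that a single inner-edge contraction removes exactly one cluster and that distinct inner edges correspond to distinct clusters.
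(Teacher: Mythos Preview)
Your argument is correct and follows exactly the route the paper indicates: the paper states the corollary as an immediate consequence of Thm.~\ref{thm:edge-contr-supergraph} and Prop.~\ref{prop:LRTuniq} without giving any further proof, and you spell out precisely these two ingredients. The additional bookkeeping you supply---that contracting distinct inner edges removes distinct clusters and hence yields distinct phylogenetic trees---is the step the paper leaves implicit as ``immediate,'' and your treatment of it is sound.
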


Let us now turn to the subclass of BMGs that can be explained by a
binary tree. 
\begin{Definition}\label{def:beBMG}
  A \emph{binary-explainable BMG} (\emph{beBMG}) is a BMG $(\G,\sigma)$
  such that there is a binary leaf-colored tree $(T,\sigma)$ that explains
  $(\G,\sigma)$.
\end{Definition}
As shown in \cite{Schaller:21a}, beBMGs can be characterized among BMGs
by means of a simple forbidden colored induced subgraph:
\begin{Definition}
  An \emph{hourglass} in a properly vertex-colored graph $(\G,\sigma)$,
  denoted by $[xy \hourglass x'y']$, is a subgraph $(\G[Q],\sigma_{|Q})$
  induced by a set of four pairwise distinct vertices
  $Q=\{x, x', y, y'\}\subseteq V(\G)$ such that (i)
  $\sigma(x)=\sigma(x')\ne\sigma(y)=\sigma(y')$, (ii) $(x,y),(y,x)$ and 
  $(x'y'),(y',x')$ are bidirectional arcs
  in $\G$, (iii) $(x,y'),(y,x')\in E(\G)$, and (iv)
  $(y',x),(x',y)\notin E(\G)$.
\end{Definition}

An hourglass together with a (non-binary) tree explaining it is 
illustrated in Fig.~\ref{fig: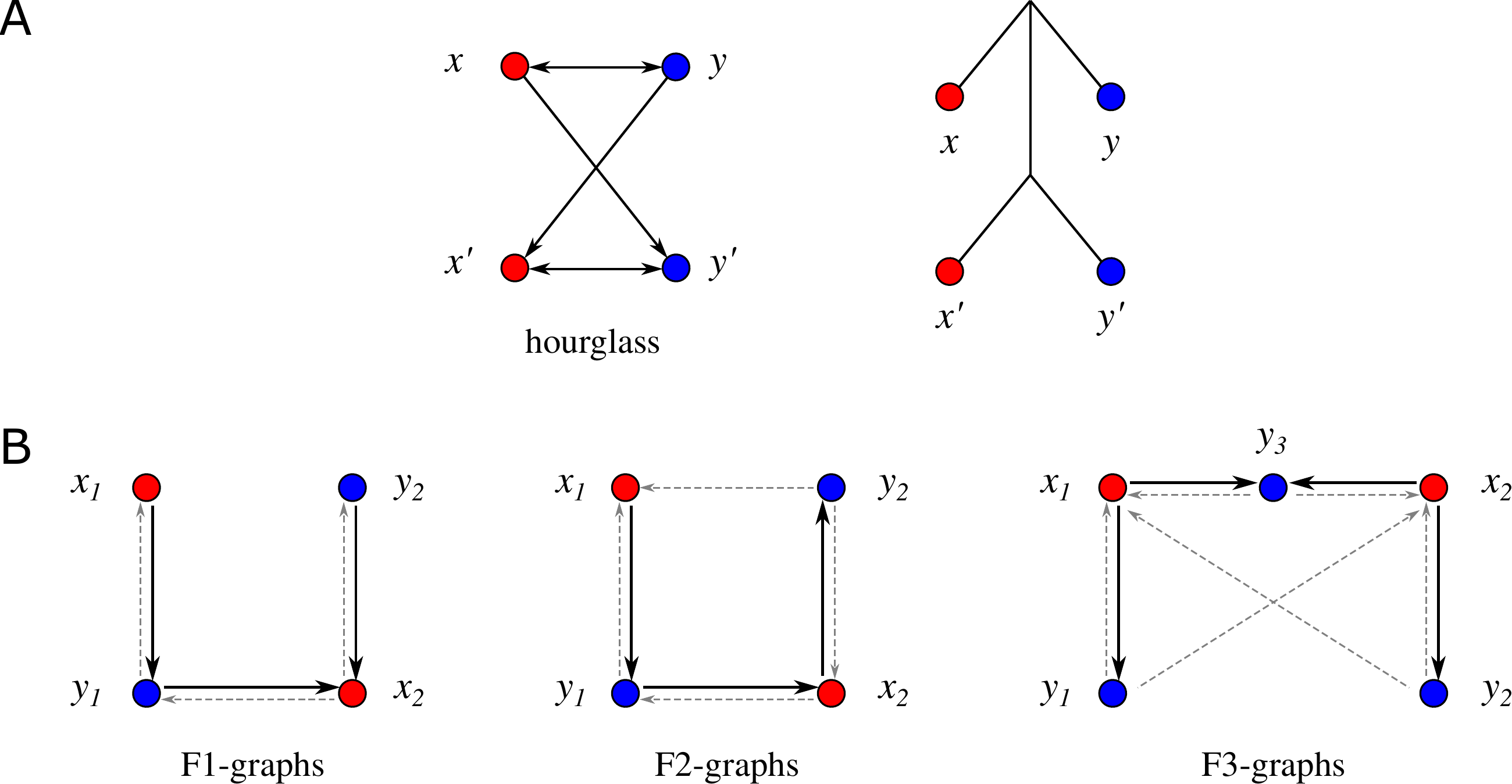}(A).
A properly vertex-colored digraph that does not contain an hourglass as an 
induced subgraph is called \emph{hourglass-free.}

\begin{figure}[htb]
  \centering
  \includegraphics[width=0.75\textwidth]{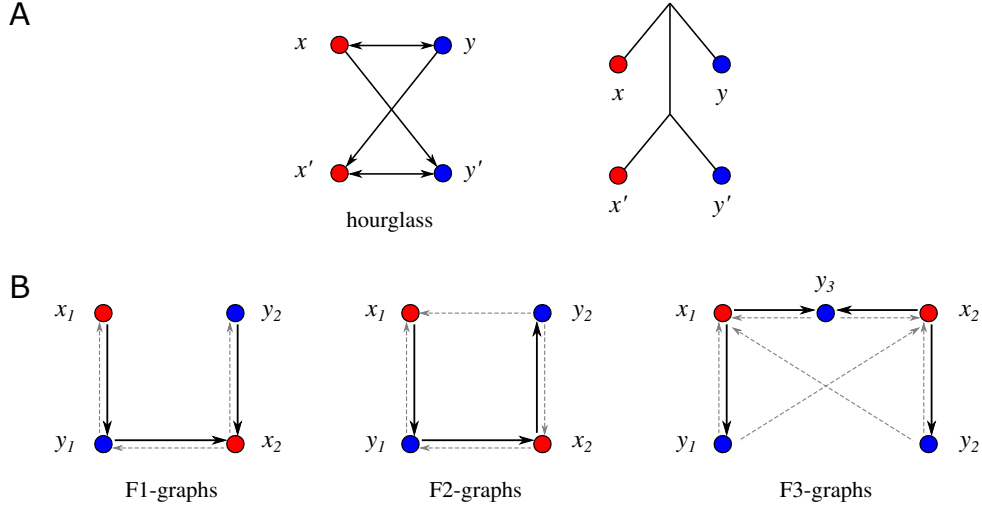}
  \caption{(A): An hourglass as the characteristic forbidden induced 
    subgraph of 
    beBMGs and its non-binary explaining tree.
    (B): The three classes of forbidden induced subgraphs of 2-colored BMGs 
    (see 
    Def.~\ref{def:forbidden-subgraphs} below). The 
    gray dashed arcs may or may not exist.}
  \label{fig:forb-sg-and-hourglass.pdf}
\end{figure}

\begin{Proposition}{\cite[Lemma~31 and Prop.~8]{Schaller:21a}}
  \label{prop:binary-iff-subtree-colors}
  For every BMG $(\G,\sigma)$ explained by a tree $(T,\sigma)$, the following 
  three statements are equivalent:
  \begin{enumerate}
    \item $(\G,\sigma)$ is binary-explainable.
    \item $(\G,\sigma)$ is hourglass-free.
    \item There is
    no vertex $u\in V^0(T)$ with three distinct children $v_1$, $v_2$,
    and $v_3$ and two distinct colors $r$ and $s$ satisfying
    \begin{enumerate}
      \item $r\in\sigma(L(T(v_1)))$, 
      $r,s\in\sigma(L(T(v_2)))$, 
      and $s\in\sigma(L(T(v_3)))$, and
      \item $s\notin\sigma(L(T(v_1)))$, and 
      $r\notin\sigma(L(T(v_3)))$.
    \end{enumerate}
  \end{enumerate}
\end{Proposition}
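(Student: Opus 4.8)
The three statements are naturally proved in a cycle $(1)\Rightarrow(2)\Rightarrow(3)\Rightarrow(1)$, whose core is a pair of structural translations between hourglasses in $(\G,\sigma)$ and the ``bad vertex'' configuration of (3) in an explaining tree. The plan is to first establish a \emph{locating lemma}: if $(\G,\sigma)$ contains an hourglass $[xy\hourglass x'y']$ and $(T,\sigma)$ is \emph{any} tree explaining it, then $u:=\lca_T(x,y)$ already witnesses (3). Writing $r:=\sigma(x)=\sigma(x')$ and $s:=\sigma(y)=\sigma(y')$, I would read off the last common ancestors forced by the eight arc conditions. Since $(x,y),(x,y')\in E$ and $(y,x),(y,x')\in E$ are all best matches they tie, giving $\lca_T(x,y)=\lca_T(x,y')=\lca_T(x',y)=u$, whereas $(x',y')\in E$ but $(x',y)\notin E$ (and symmetrically for $y'$) forces $\lca_T(x',y')\prec_T u$. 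These relations place $x$, $y$, and the pair $\{x',y'\}$ in three \emph{distinct} subtrees rooted at children $v_1,v_2,v_3$ of $u$, and a closer same-colored leaf in the ``wrong'' child would break one of the best-match ties, which pins down $r\in\sigma(L(T(v_1)))$, $r,s\in\sigma(L(T(v_2)))$, $s\in\sigma(L(T(v_3)))$ together with $s\notin\sigma(L(T(v_1)))$ and $r\notin\sigma(L(T(v_3)))$. This is at once a polytomy and a bad vertex; applied to a hypothetical binary explaining tree it yields $(1)\Rightarrow(2)$, and applied to the fixed $T$ it directly yields $(3)\Rightarrow(2)$.

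For $(2)\Rightarrow(3)$ I would prove the contrapositive by reversing this construction. Given a bad vertex $u$ with children $v_1,v_2,v_3$ and colors $r,s$ as above, pick $x\in L(T(v_1))$ of color $r$, $y\in L(T(v_3))$ of color $s$, and a reciprocal best-match pair $x',y'\in L(T(v_2))$ of colors $r$ and $s$. Such a reciprocal pair exists: a pair of leaves of colors $r$ and $s$ in $T(v_2)$ whose last common ancestor is $\preceq_T$-minimal is automatically mutually best, and it stays mutually best in all of $T$ because any leaf outside $T(v_2)$ has last common ancestor with $x'$ (resp.\ $y'$) at or above $u$, hence strictly above $\lca_T(x',y')$. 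A direct check of last common ancestors, using that $v_1$ omits $s$ and $v_3$ omits $r$, then verifies all six present and two absent cross-color arcs of the hourglass, while properness of $\sigma$ supplies the monochromatic non-arcs; so the induced subgraph on $\{x,x',y,y'\}$ is exactly an hourglass.

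For $(3)\Rightarrow(1)$ I would first reformulate (3): at an inner vertex $u$, ``no bad vertex'' says precisely that no two color classes $X_t:=\{v\in\child_T(u)\mid t\in\sigma(L(T(v)))\}$ \emph{cross}, i.e.\ the family $\{X_t\}_t$ is laminar. A laminar family on $\child_T(u)$ realizes as nested clades, so I would refine every polytomy of $T$ into a binary tree in which each non-universal $X_t$ occurs as a clade, leaving the subtrees $T(v)$ untouched. By Prop.~\ref{prop:TA} the refinement $T'$ satisfies $\G(T',\sigma)\subseteq\G(T,\sigma)=(\G,\sigma)$, so it suffices to show no arc is lost. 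The only arcs at risk are those $(a,b)$ with $a$ in a child $v$ whose color set misses $\sigma(b)$, for then $b$ ranges over \emph{all} $\sigma(b)$-colored leaves in sibling subtrees, tied at $u$; keeping $X_{\sigma(b)}$ an unsplit clade keeps all of them tied at a common ancestor, so none drops out. Iterating — colors that become universal inside a clade impose no further constraint and may be resolved freely — produces a binary tree explaining $(\G,\sigma)$.

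I expect this last step to be the main obstacle. The delicate points are the bookkeeping that a leaf's outward best matches really are \emph{all} same-colored sibling leaves, so that ``$X_t$ is an unsplit clade'' is exactly the condition preserving them, and the inductive claim that condition (3), equivalently laminarity of $\{X_t\}_t$, is inherited by every sub-polytomy created during the refinement, so that the recursion can be carried out consistently down to a fully binary resolution. The earlier two translations, by contrast, are essentially careful $\lca$-arithmetic once the right four vertices have been identified.
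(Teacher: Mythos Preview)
The paper does not prove this proposition at all: it is imported verbatim from \cite[Lemma~31 and Prop.~8]{Schaller:21a} and stated without proof, so there is no in-paper argument to compare against. Your outline is a faithful reconstruction of how such a result is typically established and is essentially correct. The ``locating lemma'' and its converse (your $(1)\Rightarrow(2)$, $(2)\Leftrightarrow(3)$) are exactly the $\lca$-bookkeeping one expects, and your reformulation of (3) as laminarity of the color-incidence sets $X_t$ on $\child_T(u)$ is the right structural translation for $(3)\Rightarrow(1)$.

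The only place that deserves a closer look is the one you already flag: in the refinement step you need not merely that each non-universal $X_t$ becomes a clade, but that for every $v\notin X_t$ all of $X_t$ lies in a single subtree from $v$'s point of view in the refined star, so that the tied best matches of color $t$ for leaves in $T(v)$ remain tied. This does follow once you build the refinement from the laminar family itself (so that the clades are precisely the $X_t$'s and their intersections), and your observation that laminarity restricts to each newly created group --- since nesting/disjointness is preserved under intersection with a fixed set --- is what makes the recursion go through. With those two points made explicit, the argument is complete; the cited source proceeds along the same lines, packaging the refinement as the construction of a canonical ``binary-refinable tree''.
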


It is worth noting that the LRTs of beBMGs are usually not binary. In fact,
it is shown in \cite{Schaller:21c} that, for a beBMG $(G,\sigma)$, there
exists a unique \emph{binary refinable tree} (BTR) $B(\G,\sigma)$ with the
property that every binary tree $(T,\sigma)$ that displays $B(\G,\sigma)$
explains $(\G,\sigma)$. The BRT is in general much better resolved than the
LRT of $(G,\sigma)$.

\section{Two-Colored BMGs} 

Let us now briefly focus on 2-colored BMGs (2-BMGs).  Since arcs in BMG can
only connect vertices with different colors, every 2-BMG is
bipartite. Furthermore, every leaf $x$ in a tree with two leaf colors has
at least one best match $y$. Every 2-BMG is therefore \emph{sink-free},
i.e., every vertex has at least one out-neighbor.  Furthermore,
\citet{Schaller:21b} showed that the following graphs (see also
Fig.~\ref{fig:forb-sg-and-hourglass.pdf}(B)) are forbidden induced
subgraphs for 2-BMGs.

\begin{Definition}[F1-, F2-, and F3-graphs]\par\noindent
  \begin{itemize}
    \item[\AX{(F1)}] A properly 2-colored graph on four distinct vertices
    $V=\{x_1,x_2,y_1,y_2\}$ with coloring
    $\sigma(x_1)=\sigma(x_2)\ne\sigma(y_1)=\sigma(y_2)$ is an
    \emph{F1-graph} if $(x_1,y_1),(y_2,x_2),(y_1,x_2)\in
    E$ and $(x_1,y_2),(y_2,x_1)\notin E$.
    \item[\AX{(F2)}] A properly 2-colored graph on four distinct vertices
    $V=\{x_1,x_2,y_1,y_2\}$ with coloring
    $\sigma(x_1)=\sigma(x_2)\ne\sigma(y_1)=\sigma(y_2)$ is an
    \emph{F2-graph} if $(x_1,y_1),(y_1,x_2),(x_2,y_2)\in E$ and
    $(x_1,y_2)\notin E$.
    \item[\AX{(F3)}] A properly 2-colored graph on five distinct vertices
    $V=\{x_1,x_2,y_1,y_2,y_3\}$ with coloring
    $\sigma(x_1)=\sigma(x_2)\ne\sigma(y_1)=\sigma(y_2)=\sigma(y_3)$ is an
    \emph{F3-graph} if \newline
    $(x_1,y_1),(x_2,y_2),(x_1,y_3),(x_2,y_3)\in E$ and
    $(x_1,y_2),(x_2,y_1)\notin E$.
  \end{itemize}
  \label{def:forbidden-subgraphs}
\end{Definition}

\begin{Proposition} \cite[Thm.~4.4]{Schaller:21b}
  \label{prop:forb-subgraph-charac}
  A properly 2-colored graph is a BMG if and only if it is sink-free and
  does not contain an induced F1-, F2-, or F3-graph.
\end{Proposition}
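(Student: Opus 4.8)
The plan is to prove the two implications separately, with the forward (``only if'') direction a direct verification and the converse the substantial one.

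For necessity, suppose $(\G,\sigma)=\G(T,\sigma)$. Sink-freeness is immediate: both colors occur among the leaves, so every leaf $x$ admits a leaf $y$ of the opposite color minimizing $\lca_T(x,y)$, and hence $x$ has out-degree at least one. To exclude each of the three configurations I would assume it occurs as an induced subgraph on the indicated vertices and argue entirely with last common ancestors, using the elementary fact that for any three leaves $a,b,c$ at least two of the three pairwise lca's coincide and equal $\lca_T(a,b,c)$. Each present arc $(p,q)$ forces $\lca_T(p,q)$ to be minimal among $\lca_T(p,q')$ over the $q'$ of color $\sigma(q)$, and each absent arc forces the corresponding lca to be strictly larger; chasing these (in)equalities around the configuration produces a triple of leaves whose three pairwise lca's have a unique strict maximum, contradicting the elementary fact. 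For instance, in an F3-graph the arcs at $x_1$ give $m\coloneqq\lca_T(x_1,y_1)=\lca_T(x_1,y_3)$ and the arcs at $x_2$ give $m'\coloneqq\lca_T(x_2,y_2)=\lca_T(x_2,y_3)$, the missing arcs force $\lca_T(x_2,y_1)\succ m'$ and $\lca_T(x_1,y_2)\succ m$, and applying the fact (up to the evident symmetry, say with $m\preceq m'$) first to $\{x_1,x_2,y_1\}$ and then to $\{x_1,x_2,y_3\}$ yields the contradiction. The F1- and F2-cases are shorter variants of the same chase.

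For sufficiency --- the hard direction --- I would construct an explaining tree by a top-down recursion and show that the exclusion of F1, F2, F3 is exactly what keeps the recursion consistent. The structural fact driving the construction is that, in any $\G(T,\sigma)$ with root $\rho$ whose maximal subtrees carry leaf sets $V_1,\dots,V_k$, the out-neighborhood of a vertex $x\in V_i$ of color $s$ is contained in $V_i$ when $V_i$ meets the opposite color, and equals the full set of opposite-colored vertices otherwise. I would therefore define a candidate partition of $V$ from $(\G,\sigma)$ via an auxiliary undirected graph whose edges join pairs of vertices forced to lie in a common maximal subtree, declare its connected components to be the $V_i$, attach them under a common root, and recurse on each component. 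Two things then have to be established: that sink-freeness together with the absence of the three forbidden subgraphs guarantees at least two components (so the root is a genuine branching and the recursion strictly shrinks the instance), and that each component again inherits sink-freeness and forbidden-subgraph-freeness, so the induction applies; finally one checks that the assembled tree reproduces the arc set exactly.

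The main obstacle is precisely this consistency analysis for the recursion. I expect each of the three forbidden graphs to correspond to one distinct way the candidate partition could fail: a one-sided best match that cannot be separated from its partner (F1), a directed best-match chain forcing two vertices simultaneously into the same and into different subtrees (F2), and a three-fold multiplicity in the second color obstructing a consistent split (F3). The crux is to show that ruling out all three is not merely necessary but sufficient, i.e.\ that no further local obstruction can arise, which amounts to verifying that the out-neighborhood dichotomy above is respected across the candidate partition at every level. This mirrors, in spirit, the subtree-color condition of Prop.~\ref{prop:binary-iff-subtree-colors}, and I expect the careful bookkeeping of which colors occur in which component to be the most delicate part of the argument.
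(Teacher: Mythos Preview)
The paper does not prove this proposition at all: it is quoted verbatim as \cite[Thm.~4.4]{Schaller:21b} and used as a black box, with no argument given in the present manuscript. There is therefore nothing here to compare your proposal against; the actual proof lives in the cited reference.

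That said, your outline is broadly the right shape for such a result. The necessity direction via lca-chasing is standard and your F3 sketch is essentially correct. For sufficiency, the recursive top-down construction you describe is indeed the natural strategy (and is, in fact, the one used in \cite{Schaller:21b} and related work), but your plan leaves the crucial step underspecified: you never say what the auxiliary graph is, i.e., which pairs of vertices are ``forced to lie in a common maximal subtree.'' In the actual argument this is made precise via reachability/thinness classes (equivalently, via the out-neighborhood relation $N^+(x)\subseteq N^+(y)$ among same-colored vertices), and the heart of the proof is showing that absence of F1/F2/F3 makes these classes nest properly so that the partition is well-defined and nontrivial. Your correspondence ``F1 $\leftrightarrow$ one failure mode, F2 $\leftrightarrow$ another, F3 $\leftrightarrow$ a third'' is suggestive but not yet an argument; in practice the three obstructions interact, and one typically needs all of them simultaneously to push through the key lemma that the candidate partition has at least two blocks. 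If you want to reconstruct the proof, the place to look is \cite{Schaller:21b}; within this paper the proposition is simply imported.
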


A peculiar property of 2-BMGs is that their LRTs can be constructed
efficiently by recursively decomposing an input 2-BMG into non-trivial
induced subgraphs and individual vertices \cite{Schaller:21c}. Although we
will not need this construction here, one of its corner stones plays an
important role below.
\begin{Definition}[Support Leaves]
  For a given tree $T$, the set $S_{u} \coloneqq \child_T(u)\cap L(T)$ is
  the set of all \emph{support leaves} of vertex $u\in V(T)$.
\end{Definition}
We note in passing that every inner vertex $u$ of the LRT of a 2-BMG
$(\G,\sigma)$, with the possible exception of the root $\rho$, has a
non-empty set of support leaves $S_{u}$, and $S_{\rho}\ne\emptyset$ if and
only if $(\G,\sigma)$ is connected \cite{Schaller:21x}. In the following, we
will make use of a connection between a 2-BMG and its LRT:

\begin{Lemma}
  \label{lem:2-BMG-outneighbors-subtree}
  Let $(\G,\sigma)$ be a 2-BMG, $(T,\sigma)$ its LRT and
  $x,y\in L(T)=V(\G)$.  Then $(x,y)\in E(\G)$ if and only if
  $\sigma(x)\ne \sigma(y)$ and $y\in L(T(\parent_T(x)))$.
\end{Lemma}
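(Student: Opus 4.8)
The plan is to prove the biconditional in both directions, using the best-match definition directly together with the structural fact that $(T,\sigma)$ is the LRT of a 2-BMG. Throughout, let $\sigma$ take exactly two colors, say $r$ and $s$. The key observation I want to exploit is the following consequence of the LRT structure noted just before the lemma: every inner vertex $u$ (except possibly the root) has a support leaf, i.e.\ a leaf child. This means that for any leaf $x$ with parent $p\coloneqq\parent_T(x)$, the sibling subtrees hanging off $p$ together with $x$ itself already realize both colors locally in a controlled way, which is exactly what will let me pin down the best matches of $x$.

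For the forward direction, suppose $(x,y)\in E(\G)$, i.e.\ $y$ is a best match of $x$. Then by Definition~\ref{def:BMG} we have $\sigma(x)\ne\sigma(y)$ and $\lca_T(x,y)\preceq_T\lca_T(x,y')$ for every leaf $y'$ with $\sigma(y')=\sigma(y)$. I must show $y\in L(T(p))$ where $p=\parent_T(x)$. Suppose not, so $\lca_T(x,y)\succ_T p$, meaning $p\prec_T\lca_T(x,y)$. The idea is to derive a contradiction by producing a leaf $y'$ of color $\sigma(y)$ strictly closer to $x$. If $T(p)$ already contains a leaf of color $\sigma(y)$, that leaf $y'$ satisfies $\lca_T(x,y')\preceq_T p\prec_T\lca_T(x,y)$, contradicting that $y$ is a best match. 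So I must argue that $T(p)$ does contain a leaf of color $\sigma(y)$; this is where I expect the main work, and where the support-leaf property and the minimality (least resolvedness) of $T$ enter. The point is that if $T(p)$ were monochromatic in color $\sigma(x)$, then (using phylogeneticity and the support-leaf structure) the edge $\parent_T(p)\,p$ or some edge below $p$ would turn out to be redundant, contradicting that $T$ is least resolved — intuitively, a subtree all of whose leaves have the same color and whose best matches all lie outside it can be collapsed without changing the BMG.

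For the converse, suppose $\sigma(x)\ne\sigma(y)$ and $y\in L(T(p))$ with $p=\parent_T(x)$; I must show $y$ is a best match of $x$. Since $y\preceq_T p$ and $x\preceq_T p$ with $x$ a leaf child of $p$, we have $\lca_T(x,y)=p=\parent_T(x)$. This is the smallest possible value that $\lca_T(x,y')$ can take over all leaves $y'$, because $\lca_T(x,y')\succeq_T\parent_T(x)$ always holds for any leaf $y'\ne x$ (the parent of $x$ is an ancestor of both $x$ and any such $y'$ that lies in $T(p)$, and for $y'\notin T(p)$ the lca is even higher). Hence $\lca_T(x,y)=\parent_T(x)\preceq_T\lca_T(x,y')$ for all $y'$ of color $\sigma(y)$, so $y$ is a best match and $(x,y)\in E(\G)$.

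The main obstacle, as flagged above, is the forward direction: proving that $T(\parent_T(x))$ must contain a leaf of color $\sigma(y)$ whenever $x$ has a best match of color $\sigma(y)$. I expect to handle this by invoking the characterization of redundant edges from \cite[Lemma~7]{Schaller:21a} (the same tool used in the proof of Thm.~\ref{thm:edge-contr-supergraph}) to show that, in a least resolved 2-colored tree, no inner vertex $p$ can have $T(p)$ monochromatic unless $p$ is forced by the two-color structure to still receive best-match arcs from the other color within $T(p)$; combined with the support-leaf observation this rules out the bad case. If a cleaner route is available, one can instead argue directly that the presence of a best match of color $\sigma(y)$ at height strictly above $p$ together with absence of that color in $T(p)$ makes the edge below $p$ redundant, contradicting least resolvedness.
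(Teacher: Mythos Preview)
Your proposal is correct and follows essentially the same route as the paper. The converse direction is identical. For the forward direction, both you and the paper reduce to the claim that $T(\parent_T(x))$ contains both colors whenever $\parent_T(x)\ne\rho$; the paper simply cites this as Cor.~1 in \cite{Schaller:21x}, while you plan to re-derive it via the redundant-edge characterization of \cite[Lemma~7]{Schaller:21a}. That re-derivation works exactly as you indicate: if $L(T(p))$ were monochromatic, then no arc $(a,b)\in E(\G)$ with $\lca_T(a,b)=p$ can exist (since $a,b$ would share a color), so the inner edge $\parent_T(p)\,p$ is redundant, contradicting least resolvedness. Your opening emphasis on the support-leaf property is a slight red herring---you never actually need that every non-root inner vertex has a leaf child, only the weaker fact that its subtree contains both colors.
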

\begin{proof}
  First note that, since $(\G,\sigma)$ is 2-colored, $(T,\sigma)$ has at
  least two leaves and $u\coloneqq\parent_T(x)$ is always defined.  
  First, assume $\sigma(x)\ne\sigma(y)$,
  and thus $x\ne y$, and let $y\in L(T(u))$.
  Since $x$ is a child of $u$, we have $\lca_{T}(x,y)=u$.
  Moreover, since $u$ is the parent of $x$, there is no vertex $y'$ of color 
  $\sigma(y)$ such that
  $\lca_{T}(x,y')\prec_T\lca_T(x,y)=u$. Hence, $y$ is a best match of $x$,
  i.e., $(x,y)\in E(\G)$. 
  
  Now suppose, for contraposition, that
  $\sigma(x)=\sigma(y)$ or $y\notin L(T(u))$. 
  If $\sigma(x)=\sigma(y)$, then, by definition, $(x,y)\notin E(G)$.
  If $y\notin L(T(u))$, then
  $u\prec_{T}\rho_T$.  Hence, we can apply Cor.~1 in \cite{Schaller:21x} to
  the inner vertex $u$ to conclude that $|\sigma(L(T(u)))|>1$, i.e., the
  subtree $L(T(u))$ contains both colors.  Thus, we can find a vertex $y'$
  of color $\sigma(y)$ such that
  $\lca_{T}(x,y')\preceq_{T}u\prec_{T}\lca_{T}(x,y)$ which implies that
  $(x,y)\notin E(\G)$.
\end{proof}
As an immediate consequence, we find
\begin{Corollary}
  \label{cor:reciprocal-iff-same-parent}
  Let $(\G,\sigma)$ be a 2-BMG, $(T,\sigma)$ its LRT and
  $x,y\in V(\G)=L(T)$.  Then $(x,y),(y,x)\in E(\G)$ if and only if
  $\sigma(x)\ne \sigma(y)$ and $\parent_T(x)=\parent_T(y)$.
\end{Corollary}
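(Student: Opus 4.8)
The plan is to derive Corollary~\ref{cor:reciprocal-iff-same-parent} directly from Lemma~\ref{lem:2-BMG-outneighbors-subtree} by applying the lemma twice, once to the arc $(x,y)$ and once to the arc $(y,x)$, and then combining the two subtree conditions into the single parent-equality condition. The forward and backward directions share the same two ingredients, so I would prove the biconditional by chaining equivalences rather than treating the two implications separately.

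\medskip

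First I would apply Lemma~\ref{lem:2-BMG-outneighbors-subtree} to the ordered pair $(x,y)$: the arc $(x,y)$ lies in $E(\G)$ if and only if $\sigma(x)\ne\sigma(y)$ and $y\in L(T(\parent_T(x)))$. Then I would apply the same lemma to the swapped pair $(y,x)$, giving that $(y,x)\in E(\G)$ if and only if $\sigma(y)\ne\sigma(x)$ and $x\in L(T(\parent_T(y)))$. Since $\sigma(x)\ne\sigma(y)$ is symmetric, the conjunction $(x,y),(y,x)\in E(\G)$ is equivalent to the statement that $\sigma(x)\ne\sigma(y)$, $y\in L(T(\parent_T(x)))$, and $x\in L(T(\parent_T(y)))$ all hold simultaneously. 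So the entire task reduces to showing that, under $\sigma(x)\ne\sigma(y)$ (hence $x\ne y$), the two mutual-containment conditions together are equivalent to $\parent_T(x)=\parent_T(y)$.

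\medskip

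For that reduced equivalence, the easy direction is that $\parent_T(x)=\parent_T(y)=:u$ immediately yields both $y\in L(T(u))$ and $x\in L(T(u))$, since each vertex is a leaf lying in the subtree rooted at its own parent. The more delicate direction is showing that the two containments force the parents to coincide. Here I would argue using the ancestor order: $y\in L(T(\parent_T(x)))$ means $y\preceq_T\parent_T(x)$, so $\lca_T(x,y)\preceq_T\parent_T(x)$; symmetrically $x\in L(T(\parent_T(y)))$ gives $\lca_T(x,y)\preceq_T\parent_T(y)$. Because $x\ne y$, we have $\lca_T(x,y)\succ_T x$ and $\lca_T(x,y)\succ_T y$, so $\lca_T(x,y)$ is an ancestor of both leaves that is strictly above each; the minimal such common ancestor must sit at or above each parent, and combining $\parent_T(x)\preceq_T\lca_T(x,y)$ (since $\lca_T(x,y)$ is a common ancestor lying on the path through $\parent_T(x)$) with $\lca_T(x,y)\preceq_T\parent_T(x)$ pins down $\lca_T(x,y)=\parent_T(x)$, and symmetrically $\lca_T(x,y)=\parent_T(y)$, whence $\parent_T(x)=\parent_T(y)$.

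\medskip

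I expect the main obstacle to be the clean bookkeeping of the $\preceq_T$ inequalities in this last step: one must be careful to use that $x$ and $y$ are leaves (so their parents are well defined and lie strictly above them) and that $\lca_T(x,y)$ is by definition the $\preceq_T$-minimal common ancestor, in order to conclude equality rather than mere comparability of the two parents. Everything else is a mechanical invocation of the lemma, so this corollary should follow in just a few lines once the parent-versus-$\lca$ identification is stated carefully.
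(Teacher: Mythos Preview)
Your proposal is correct and matches the paper's approach: the paper simply states that the corollary is an immediate consequence of Lemma~\ref{lem:2-BMG-outneighbors-subtree} and gives no further proof, and your argument---applying the lemma to both $(x,y)$ and $(y,x)$ and then identifying $\parent_T(x)=\parent_T(y)=\lca_T(x,y)$ from the two mutual-containment conditions---is exactly the natural way to spell out that immediate consequence.
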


\section{Completion of a 2-BMG to a 2-beBMG}

Writing $\G+F\coloneqq (\G, E\cup F)$ for a graph $\G=(V,E)$ and arc set
$F\subseteq V\times V \setminus \{(v,v)\mid v\in V\}$, consider the
following graph completion problem:

\begin{Problem}[\PROBLEM{2-BMG Completion restricted to 
    Binary-Explainable Graphs (2-BMG CBEG)}]\ \\
  \begin{tabular}{ll}
    \emph{Input:}    & A properly 2-colored digraph $(\G =(V,E),\sigma)$
    and an integer $k$.\\
    \emph{Question:} & Is there a subset $F\subseteq V\times V \setminus
    (\{(v,v)\mid v\in V\} \cup E)$ such that\\
    &  $|F|\leq k$ and $(\G+ F,\sigma)$
    is a binary-explainable 2-BMG?
  \end{tabular}
\end{Problem}

In the general case, \PROBLEM{2-BMG CBEG} is NP-complete
\cite[Cor.~5.11]{Schaller:21c}. Here we are interested in the restriction
of the \PROBLEM{2-BMG CBEG} problem with BMGs as input.

The following result holds for BMGs and their completions to beBMGs with an
arbitrary number of colors. 
\begin{Lemma}
  \label{lem:fill-houglass}
  Let $(\G',\sigma)$ be a completion of a BMG $(\G,\sigma)$ to a beBMG, and
  let $[xy\hourglass x'y']$ be an induced hourglass in $(\G,\sigma)$.  Then
  $(\G',\sigma)$ contains both arcs $(x',y)$ and $(y',x)$.
\end{Lemma}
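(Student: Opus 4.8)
The plan is to argue by contradiction, exploiting two facts: a completion only adds arcs, and a beBMG is simultaneously hourglass-free and, after restriction to any two colors, a genuine 2-BMG containing no induced F1-, F2-, or F3-graph.

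First I would record what survives the completion. Since $(\G',\sigma)$ arises from $(\G,\sigma)$ by inserting arcs, every arc of the hourglass is retained; in particular the bidirectional pairs $(x,y),(y,x),(x',y'),(y',x')$ together with $(x,y')$ and $(y,x')$ all lie in $E(\G')$. Thus conditions (i)--(iii) of the hourglass definition continue to hold for $Q=\{x,x',y,y'\}$ in $(\G',\sigma)$, and the only way $Q$ can fail to induce an hourglass is that at least one of $(x',y)$, $(y',x)$ has been added. This already yields ``at least one'', and the substance of the lemma is to upgrade this to ``both''.

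Next I would pass to the two colors $\sigma(x)=\sigma(x')$ and $\sigma(y)=\sigma(y')$. The subgraph of $(\G',\sigma)$ induced by the vertices carrying these two colors is again a BMG (the restriction of a BMG to a subset of colors is a BMG), and it is binary-explainable, hence a 2-beBMG; by Prop.~\ref{prop:forb-subgraph-charac} it contains no induced F1-, F2-, or F3-graph, and by Prop.~\ref{prop:binary-iff-subtree-colors} it is hourglass-free. Crucially, all four vertices of $Q$ carry one of these two colors, so the subgraph induced by $Q$ is the same whether computed in $(\G',\sigma)$ or in this two-color restriction. Now I run a short case distinction on the presence of $(x',y)$ and $(y',x)$. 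If both are absent, $Q$ induces an hourglass, contradicting hourglass-freeness. If exactly one is present, say $(x',y)\in E(\G')$ but $(y',x)\notin E(\G')$, I would check that $Q$ induces an F2-graph: with $x_1=y'$, $x_2=y$, $y_1=x'$, $y_2=x$, the arcs $(x_1,y_1)=(y',x')$, $(y_1,x_2)=(x',y)$, $(x_2,y_2)=(y,x)$ are present while $(x_1,y_2)=(y',x)$ is absent, which is exactly the F2 pattern (the remaining arcs being unconstrained). The analogous assignment $x_1=x'$, $x_2=x$, $y_1=y'$, $y_2=y$ settles the case $(y',x)\in E(\G')$, $(x',y)\notin E(\G')$. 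Either way we contradict the absence of induced F2-graphs, so both arcs must be present.

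I expect the main obstacle to be precisely this final case distinction. Hourglass-freeness alone only forces at least one of the two arcs to appear; ruling out the two remaining one-arc configurations is not delivered by the hourglass condition itself and instead requires recognizing them as induced F2-graphs and invoking the forbidden-subgraph characterization of 2-BMGs after passing to the two-color restriction.
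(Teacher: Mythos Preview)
Your proposal is correct and follows essentially the same route as the paper: pass to the two-color restriction (which is a 2-BMG), observe that hourglass-freeness of $(\G',\sigma)$ forces at least one of $(x',y)$, $(y',x)$ to appear, and then rule out the one-arc configurations by recognizing them as induced F2-graphs. The paper dispatches the symmetric case with a ``w.l.o.g.''\ rather than spelling out both labelings, but the argument is the same.
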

\begin{proof}
  It is shown in \cite[Obs.~1]{Geiss:19a} that the subgraphs of a BMG induced
  by all vertices with any two given colors is a 2-BMG. Since
  $(\G',\sigma)$ is a (binary-explainable) BMG, all of its 2-colored
  induced subgraphs are therefore 2-BMGs.  By assumption, $(\G,\sigma)$ is
  not binary-explainable since it contains the hourglass
  $[xy\hourglass x'y']$ as an induced subgraph (cf.\
  Prop.~\ref{prop:binary-iff-subtree-colors}). The hourglass contains all
  possible arcs between vertices of different colors except $(x',y)$ and
  $(y',x)$. Since $(\G',\sigma)$ contains no hourglass, and $\G'$ is a
  completion of $G$, i.e., $E(\G)\subseteq E(\G')$, we conclude that
  $(\G',\sigma)$ contains at least one of the arcs $(x',y)$ and $(y',x)$.
  
  Assume for contradiction that, w.l.o.g., $(\G',\sigma)$ only contains
  $(x',y)$.  We have $(y',x'),(y,x)\in E(\G')$ and
  $\sigma(y')=\sigma(y)\ne\sigma(x')=\sigma(x)$ by the definition of
  hourglasses, and by assumption $(x',y)\in E(\G')$ and
  $(y',x)\notin E(\G')$.  Hence, the four vertices $x,x',y,y'$ induce an
  F2-graph in $(\G',\sigma)$.  By Prop.~\ref{prop:forb-subgraph-charac},
  the 2-colored subgraph of $(\G',\sigma)$ induced by the two colors
  $\sigma(x)$ and $\sigma(y)$ is not a BMG. Consequently, $(\G',\sigma)$ is
  not a BMG either; a contradiction.  Hence, $(\G',\sigma)$ contains both
  arcs $(x',y)$ and $(y',x)$.
\end{proof}

\begin{Definition}
  Let $(T,\sigma)$ be a tree with a 2-colored leaf set, i.e.,
  $|\sigma(L(T))|=2$. Denote by $(T^*,\sigma)$ the \emph{collapsed tree}
  obtained from $(T,\sigma)$ by contraction of all inner edges in $T(u)$
  for all $u\in V^0(T)$ that have support leaves of both colors.
\end{Definition}

In other words, $(T^*,\sigma)$ is obtained from $(T,\sigma)$ by
collapsing every subtree $T(u)$ to a star if $u$ has support leaves
of both colors.

\begin{Lemma}
  The collapsed tree $(T^*,\sigma)$ of $(T,\sigma)$ is uniquely defined and
  can be computed from $(T,\sigma)$ in $O(|V(T)|)$-time.
  \label{lem:collaps}
\end{Lemma}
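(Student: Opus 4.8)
The plan is to establish uniqueness and the linear-time bound separately, both resting on the observation that the collapsed tree arises from contracting a single, well-determined subset of $E(T)$.

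For uniqueness, I would first note that the property of a vertex $u\in V^0(T)$ having support leaves of both colors is a property of $(T,\sigma)$ alone: it depends only on the colors occurring in $S_u=\child_T(u)\cap L(T)$. Hence the family $\mathcal{U}$ of all such vertices is uniquely determined, and so is the set $A\coloneqq\bigcup_{u\in\mathcal{U}}\{e\in E(T)\mid e\text{ is an inner edge of }T(u)\}$ of edges to be contracted. It then remains to recall that contracting a fixed set $A$ of edges of a tree yields a unique tree $T_A$, independent of the order in which the edges are processed, since the vertices of $T_A$ are exactly the connected components of the subgraph $(V(T),A)$. Consequently $(T^*,\sigma)=(T_A,\sigma)$ is uniquely defined.

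For the running time, I would first compute, in a single pass over the leaves, for every inner vertex $u$ the set of colors occurring among its support leaves: each leaf $x$ contributes its color $\sigma(x)$ to the record of $\parent_T(x)$. This identifies $\mathcal{U}$ in $O(|V(T)|)$ time. The key step is to characterize $A$ locally: an inner edge $uv$ with $v\prec_T u$ is an inner edge of $T(w)$ precisely when $u\preceq_T w$, so $uv\in A$ if and only if $u$ has an ancestor in $\mathcal{U}$ (including $u$ itself). I would therefore perform one top-down traversal setting $\mathrm{collapse}(\rho)=[\rho\in\mathcal{U}]$ and $\mathrm{collapse}(u)=\mathrm{collapse}(\parent_T(u))\vee[u\in\mathcal{U}]$ for each non-root $u$, after which an inner edge $uv$ with $v\prec_T u$ lies in $A$ exactly when $\mathrm{collapse}(u)$ is true. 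Finally, contracting all marked edges---equivalently, merging each maximal collapsed subtree into a single star by reassigning its leaves to the topmost vertex of that subtree---can be carried out in $O(|V(T)|)$ time by a standard linear-time traversal (or union--find).

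The only real obstacle is the bookkeeping around nested members of $\mathcal{U}$: when $w\prec_T u$ and both are bichromatic-support vertices, the subtrees $T(u)$ and $T(w)$ overlap and their inner-edge sets are not disjoint. Phrasing the contraction as a single set $A$, rather than as an iterated subtree-by-subtree collapse, sidesteps any order dependence in the uniqueness argument, and the top-down $\mathrm{collapse}$ flag classifies each edge exactly once, which is what keeps the overall computation linear.
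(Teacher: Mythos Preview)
Your proposal is correct and follows essentially the same approach as the paper. The only stylistic difference is that the paper argues uniqueness by observing that collapsing one subtree $T(v)$ never alters the support-leaf set $S_u$ of another vertex $u$ (hence the order of collapses is irrelevant), whereas you phrase the same fact more directly by exhibiting a single, well-determined edge set $A$ whose contraction yields $T^*$; for the linear-time bound both you and the paper use a top-down pass that identifies the subtrees rooted at the $\preceq_T$-maximal bichromatic-support vertices.
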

\begin{proof}
  The collapsed tree $(T^*,\sigma)$ is well-defined because whenever
  $v\prec_T u$, then collapsing the subtree $T(v)$ to a star does not
  change the set of support leaves $S_u$. Similarly, collapsing $T(v)$ if
  $v$ is not $\prec_T$-comparable with $u$ does not change $S_u$. Thus
  $(T^*,\sigma)$ is uniquely defined. To see that $(T^*,\sigma)$ can be
  computed in $O(|V(T)|)$ operations, we observe that it suffices to
  collapse all subtrees $T(u)$ such that $u\in V^0(T)$ has support leaves
  of both colors and there is no $u'\prec_{T}u$ with this property, i.e.,
  $u$ is $\preceq_{T}$-maximal in that sense. These vertices $u$ for which
  $T(u)$ is replaced by a star are found by a top-down traversal of $T$ and
  evaluating $|\sigma(S_u)|$, all of which can be computed in linear total
  time.
\end{proof}

As an immediate consequence of the uniqueness of $T^*$ and the construction
in the second part of the proof of Lemma~\ref{lem:collaps}, we obtain
\begin{Corollary}
  The collapsed tree $(T^{**},\sigma)$ of a collapsed tree $(T^{*},\sigma)$
  satisfies $T^{**}=T^{*}$.
\end{Corollary}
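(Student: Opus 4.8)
The plan is to show that the collapse operation is vacuous on $(T^{*},\sigma)$: I want to prove that no inner vertex of $T^{*}$ that carries support leaves of both colors still has a contractible inner edge below it. Concretely, $T^{**}$ is obtained from $T^{*}$ by contracting all inner edges of $T^{*}(w)$ for every $w\in V^{0}(T^{*})$ with $|\sigma(S_{w})|=2$, so it suffices to verify that every such $w$ already has $T^{*}(w)$ equal to a star. For bookkeeping I would use the description of $T^{*}$ from the second part of the proof of Lemma~\ref{lem:collaps}: writing $M$ for the set of $\preceq_{T}$-maximal inner vertices $u$ with $|\sigma(S_{u})|=2$ (i.e.\ those with no proper $\prec_{T}$-ancestor having support leaves of both colors), the tree $T^{*}$ arises from $T$ by replacing each $T(u)$, $u\in M$, by a star. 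Since the elements of $M$ are pairwise $\preceq_{T}$-incomparable, the subtrees $T(u)$ are vertex-disjoint, and every inner edge that the definition tells us to contract lies strictly inside some $T(u)$ with $u\in M$.

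The key technical step, and the one I expect to require the most care, is to control the support leaves of the \emph{surviving} inner vertices, i.e.\ to rule out that collapsing the subtrees $T(u)$ creates new both-color support at some ancestor. The inner vertices of $T^{*}$ are exactly the star roots $u\in M$ together with the vertices $w\in V^{0}(T)$ that are not proper $\prec_{T}$-descendants of any $u\in M$. For the latter, the point is that no contracted edge is incident with $w$ (all contracted edges sit strictly inside the disjoint subtrees $T(u)$), so the leaf children of $w$ are untouched; hence $S_{w}$ is literally the same set in $T^{*}$ as in $T$. Establishing this cleanly — including that $w$ remains an inner vertex with the same children, merely with some inner children now turned into star roots — is where the argument must be made precise.

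Granting that, I would finish by examining the two kinds of inner vertices of $T^{*}$. A star root $u\in M$ has, in $T^{*}$, support-leaf set equal to all of $L(T(u))$, which contains both colors, so $u$ does satisfy $|\sigma(S_{u})|=2$ in $T^{*}$; but $T^{*}(u)$ is by construction a star and therefore contains no inner edge, so contracting it yields nothing. A surviving vertex $w\notin M$ has $S_{w}$ unchanged from $T$; if it carried both colors it would have had $|\sigma(S_{w})|=2$ in $T$ as well, whence its $\preceq_{T}$-maximal both-color ancestor would be some $u\in M$ with $w\preceq_{T}u$, forcing $w$ to equal an element of $M$ or to be a proper descendant of one, contradicting $w\notin M$ and $w$ not being a proper descendant of any $M$-vertex. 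Thus no such $w$ has both-color support.

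Consequently the only inner vertices of $T^{*}$ with support leaves of both colors are the star roots in $M$, and each of their subtrees is already a star, so the collapse operation applied to $(T^{*},\sigma)$ contracts no edge and $T^{**}=T^{*}$. I note that uniqueness of the collapsed tree (Lemma~\ref{lem:collaps}) guarantees $T^{**}$ is well defined in the first place, so the computation above is all that is needed.
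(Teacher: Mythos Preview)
Your argument is correct and is essentially a fleshed-out version of what the paper leaves implicit: the paper simply states that the corollary is ``an immediate consequence of the uniqueness of $T^*$ and the construction in the second part of the proof of Lemma~\ref{lem:collaps}'' without further detail, and your analysis of the surviving inner vertices (showing that support-leaf sets of non-$M$ vertices are unchanged while the $M$-vertices have star subtrees) is exactly the verification that this construction is idempotent. There is no substantive difference in approach.
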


\begin{Lemma}
  If $(T^*,\sigma)$ is the collapsed tree of an LRT $(T,\sigma)$ with
  2-colored leaf set, then $\G(T^*,\sigma)$ is binary-explainable.
  \label{lem:T*be}
\end{Lemma}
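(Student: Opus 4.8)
The plan is to verify that $\G(T^*,\sigma)$ is binary-explainable by checking the purely tree-based criterion, statement~(3) of Prop.~\ref{prop:binary-iff-subtree-colors}, directly on the tree $T^*$ that explains it. First I would record two structural facts. Since $(T,\sigma)$ is least resolved and $T^*$ is obtained from $T$ by contracting inner edges, Thm.~\ref{thm:edge-contr-supergraph} guarantees that $(T^*,\sigma)$ is again least resolved; hence $T^*$ is the LRT of the $2$-BMG $\G(T^*,\sigma)$. Second, the collapse operation is idempotent, i.e.\ $T^{**}=T^*$ (Corollary above), which I will use to control the support leaves of the vertices of $T^*$.

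The key auxiliary fact is that every inner vertex $w\in V^0(T^*)$ has a bichromatic subtree, i.e.\ $|\sigma(L(T^*(w)))|=2$. For the root this is immediate since $\sigma(L(T^*))$ contains both colors; for a non-root inner vertex it is exactly Cor.~1 in \cite{Schaller:21x} applied to the LRT $T^*$. (Alternatively one can argue directly: a monochromatic subtree $T^*(w)$ would make the inner edge above $w$, or any inner edge inside $T^*(w)$, redundant via the characterization of \cite[Lemma~7]{Schaller:21a}, contradicting least resolvedness.) Equivalently, no inner vertex of $T^*$ has a monochromatic subtree.

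Now I would suppose, for contradiction, that statement~(3) fails, so there is a vertex $u\in V^0(T^*)$ with three distinct children $v_1,v_2,v_3$ and the two colors $r,s$ (the only colors present, $r\ne s$) such that $T^*(v_2)$ contains both $r$ and $s$ while $T^*(v_1)$ contains only $r$ and $T^*(v_3)$ contains only $s$. Since $T^*(v_1)$ and $T^*(v_3)$ are monochromatic, the auxiliary fact forces $v_1$ and $v_3$ to be leaves; thus $v_1,v_3\in S_u=\child_{T^*}(u)\cap L(T^*)$ are support leaves of $u$ of colors $r$ and $s$, respectively, so $u$ has support leaves of both colors in $T^*$. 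By the definition of the collapsed tree, forming $T^{**}$ from $T^*$ would contract all inner edges of $T^*(u)$ into a star; but $T^{**}=T^*$, so $T^*(u)$ must already be a star, i.e.\ every child of $u$ is a leaf. This contradicts the fact that $v_2$, having a bichromatic subtree, is an inner vertex. Hence no such $u$ exists, statement~(3) holds, and $\G(T^*,\sigma)$ is binary-explainable.

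The main obstacle, and the only place requiring genuine care, is the auxiliary fact that inner vertices of the LRT of a $2$-BMG are never monochromatic: without it, the monochromatic children $v_1,v_3$ could a priori be nontrivial subtrees and would not register as support leaves, so the idempotency argument would not apply. The remaining steps are bookkeeping, provided one is careful that $v_2$ is genuinely a non-leaf (guaranteed by its subtree being bichromatic) and that $S_u$ is evaluated in $T^*$ itself rather than in $T$.
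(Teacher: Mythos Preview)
Your proof is correct and follows essentially the same route as the paper: establish that $T^*$ is least resolved via Thm.~\ref{thm:edge-contr-supergraph}, use Cor.~1 of \cite{Schaller:21x} to conclude that the monochromatic children $v_1,v_3$ must be leaves (hence support leaves of $u$ of both colors) while $v_2$ is inner, and then derive a contradiction from the fact that $T^*(u)$ should already be a star. The only cosmetic difference is that you invoke the idempotency corollary $T^{**}=T^*$ explicitly, whereas the paper phrases the same step as ``a contradiction to its construction.''
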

\begin{proof}
  Since the collapsed tree $(T^*,\sigma)$ is obtained from the LRT
  $(T,\sigma)$ by contraction of edges,
  Thm.~\ref{thm:edge-contr-supergraph} implies that $(T^*,\sigma)$ is also
  least resolved.  Now suppose, for contradiction, that $\G(T^*,\sigma)$ 
  is not binary-explainable. By,   
  Prop.~\ref{prop:binary-iff-subtree-colors}(3),
  $(T^*,\sigma)$ has
  a vertex $u\in V^0(T^*)$ with three distinct children $v_1$, $v_2$, and
  $v_3$ and two distinct colors $r$ and $s$ satisfying (i)
  $r\in\sigma(L(T^*(v_1)))$, $r,s\in\sigma(L(T^*(v_2)))$, and
  $s\in\sigma(L(T^*(v_3)))$, and (ii) $s\notin\sigma(L(T^*(v_1)))$, and
  $r\notin\sigma(L(T^*(v_3)))$.  Since $(\G,\sigma)$ is only 2-colored, the
  latter arguments imply that
  $|\sigma(L(T^*(v_1)))|=|\sigma(L(T^*(v_3)))|=1$ and
  $|\sigma(L(T^*(v_2))|=2$.  Since moreover $(T^*,\sigma)$ is least
  resolved and none of the vertices $v_1$, $v_2$, and $v_3$ is the root of
  $T^*$, we can apply Cor.~1 in \cite{Schaller:21x} to conclude that $v_1$
  and $v_2$ are leaves, and that $v_3$ is an inner vertex, respectively. In
  particular, $\sigma(v_1)=r\ne s=\sigma(v_3)$.  Hence, $T^*(u)$ is not a
  star tree and $u$ has support leaves of both colors in $T^*$; a
  contradiction to its construction.  Therefore, we can apply
  Prop.~\ref{prop:binary-iff-subtree-colors} to conclude that
  $\G(T^*,\sigma)$ is binary-explainable.
\end{proof}

\begin{Theorem}
  \label{thm:2BMG-CBEG-BMG-input}
  The optimization version of \PROBLEM{2-BMG CBEG} with a 2-BMG
  $(\G,\sigma)$ as input has the unique solution
  $F\coloneqq E(\G(T^*,\sigma))\setminus E(\G)$, where $(T^*,\sigma)$ is
  the collapsed tree of the LRT $(T,\sigma)$ of $(\G,\sigma)$.
\end{Theorem}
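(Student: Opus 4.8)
The plan is to prove two things: that $F$ is a feasible completion, and that \emph{every} beBMG completion of $(\G,\sigma)$ already contains $E(\G(T^*,\sigma))$. The second statement yields minimality and uniqueness simultaneously, so it is where the real work lies.

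Feasibility is immediate from the machinery already set up. Since $T^*$ is obtained from the LRT $T$ by contracting inner edges, Prop.~\ref{prop:TA} gives $E(\G)\subseteq E(\G(T^*,\sigma))$, whence $\G+F=\G(T^*,\sigma)$, and Lemma~\ref{lem:T*be} says $\G(T^*,\sigma)$ is binary-explainable. Thus $F$ is feasible with $|F|=|E(\G(T^*,\sigma))|-|E(\G)|$, and it suffices to show: for every beBMG $(\G',\sigma)$ with $E(\G)\subseteq E(\G')$ one has $E(\G(T^*,\sigma))\subseteq E(\G')$. This gives $|E(\G')\setminus E(\G)|\ge |F|$, with equality forcing $E(\G')=E(\G(T^*,\sigma))$.

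To prove the inclusion I would first \emph{localize}: using Lemma~\ref{lem:2-BMG-outneighbors-subtree} for both $T$ and $T^*$, every arc of $F$ has both endpoints inside $L(T(u))$ for some inner vertex $u$ with support leaves of both colors. So fix such a $u$, with support leaves $a$ of color $r$ and $b$ of color $s$, and aim to show that $\G'$ contains both arcs between every oppositely colored pair of $L(T(u))$. Let $(T',\sigma)$ be the LRT of the $2$-BMG $(\G',\sigma)$. By Cor.~\ref{cor:reciprocal-iff-same-parent}, $a,b$ being children of $u$ yields $(a,b),(b,a)\in E(\G)\subseteq E(\G')$, and applying the same corollary in $T'$ produces a common parent $u'\coloneqq\parent_{T'}(a)=\parent_{T'}(b)$. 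Moreover, by Lemma~\ref{lem:2-BMG-outneighbors-subtree}, $a$ is adjacent in $\G$ to every $s$-leaf and $b$ to every $r$-leaf of $L(T(u))$; these arcs persist in $\G'$, so Lemma~\ref{lem:2-BMG-outneighbors-subtree} applied to $T'$ forces $L(T(u))\subseteq L(T'(u'))$.

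The crux is then to show $\parent_{T'}(z)=u'$ for \emph{every} $z\in L(T(u))$. Suppose some such $z$ lies strictly below a child $c$ of $u'$; then $c$ is an inner vertex of $T'$, distinct from the leaves $a,b$. Since $T'$ is the LRT of a $2$-BMG, every non-root inner vertex has leaves of both colors in its subtree (the fact from Cor.~1 of \cite{Schaller:21x} already invoked in Lemma~\ref{lem:T*be}), so $\sigma(L(T'(c)))=\{r,s\}$ while $\sigma(L(T'(a)))=\{r\}$ and $\sigma(L(T'(b)))=\{s\}$. But then $u'$ has three distinct children $a,b,c$ realizing exactly the color pattern forbidden by Prop.~\ref{prop:binary-iff-subtree-colors}(3), contradicting that $(\G',\sigma)$ is binary-explainable. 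Hence all leaves of $L(T(u))$ are children of $u'$, and Cor.~\ref{cor:reciprocal-iff-same-parent} makes $\G'$ contain both arcs between every oppositely colored pair in $L(T(u))$, in particular all arcs of $F$ with endpoints there. Ranging over all such $u$ covers every arc of $F$, giving $E(\G(T^*,\sigma))\subseteq E(\G')$ and thus the theorem.

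I expect this crux paragraph to be the main obstacle, and it is also where the argument must be designed carefully: the naive strategy of iteratively ``filling'' hourglasses via Lemma~\ref{lem:fill-houglass} does \emph{not} suffice, because some arcs of $F$ are forced only by the F2-obstruction (Prop.~\ref{prop:forb-subgraph-charac}) and correspond to no hourglass present in the current graph. The clean route is therefore to reason on the LRT $T'$ of the completion and apply the subtree-color characterization of Prop.~\ref{prop:binary-iff-subtree-colors}(3) globally, rather than tracking the completion arc by arc.
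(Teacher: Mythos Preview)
Your proof is correct and takes a genuinely different route from the paper.

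The paper establishes $E(\G^*)\subseteq E(\G')$ by working entirely in the graph $\G'$: for each $\preceq_T$-maximal $u$ with support leaves of both colors it runs a three-case analysis on where the two endpoints lie (both in $S_u$; one in $S_u$; neither), first exhibiting an explicit hourglass in $(\G,\sigma)$ and invoking Lemma~\ref{lem:fill-houglass}, and then, for the arcs that are still not accounted for, exhibiting an F2-graph in $(\G',\sigma)$ and invoking Prop.~\ref{prop:forb-subgraph-charac}. Your approach instead passes to the LRT $(T',\sigma)$ of the completion and applies the tree-side characterization Prop.~\ref{prop:binary-iff-subtree-colors}(3) once: the support leaves $a,b$ pin down a common parent $u'$ in $T'$, Lemma~\ref{lem:2-BMG-outneighbors-subtree} forces $L(T(u))\subseteq L(T'(u'))$, and any leaf of $L(T(u))$ that is not already a child of $u'$ yields an inner child $c$ of $u'$ with both colors, producing the forbidden pattern at $u'$ with $v_1=a$, $v_2=c$, $v_3=b$. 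This is shorter and avoids the case split; it also explains conceptually \emph{why} the hourglass/F2 bookkeeping in the paper succeeds, since both obstructions are shadows of the single tree-side obstruction at $u'$. The paper's approach, on the other hand, is self-contained at the graph level and makes explicit which forbidden induced subgraphs force each individual arc, which is informative in its own right; but for proving the theorem your argument is the more economical one. One small point worth spelling out when you write it up: the localization step (every arc of $F$ lands in some $L(T(u))$ with $u$ having support leaves of both colors) needs the observation that $\parent_{T^*}(x)$ is the $\preceq_T$-maximal such $u$ above $\parent_T(x)$, which follows from the definition of the collapsed tree but is not literally a consequence of Lemma~\ref{lem:2-BMG-outneighbors-subtree} alone.
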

\begin{proof}
  First note that the optimization version of \PROBLEM{2-BMG CBEG} always
  has a solution. To see this, consider the complete bipartite and properly 
  2-colored graph $(\G',\sigma)$
  with vertex set $V(\G)$. This graph is explained by the star tree with
  leaf set $V(\G)$.  Moreover, $(\G',\sigma)$ is clearly hourglass-free
  since hourglasses require non-arcs (between vertices of distinct colors). By
  Prop.~\ref{prop:binary-iff-subtree-colors}, the BMG $(\G',\sigma)$ is
  binary-explainable.
  
  Now consider the collapsed tree $(T^*,\sigma)$ of $(T,\sigma)$.  Since
  $T^*$ is obtained from $T$ by contraction of inner edges,
  Prop.~\ref{prop:LRTuniq} implies
  $(\G,\sigma)=\G(T,\sigma)\subseteq \G(T^*,\sigma)\eqqcolon(\G^*,\sigma)$. 
  Furthermore,
  $(\G^*,\sigma)$ is binary-explainable by Lemma~\ref{lem:T*be}. Therefore,
  $(\G^*,\sigma)$ is a valid completion of $(\G,\sigma)$ to a beBMG.
  
  We continue by showing the existence of certain arcs in every (not
  necessarily optimal) completion $(\G',\sigma)$ of $(\G,\sigma)$ to a
  beBMG.  To this end, consider a $\preceq_{T}$-maximal vertex $u$ such
  that the subtree $T(u)$ is not a star tree and $u$ has support leaves
  $S_u$ of both colors in $T$.  We will make frequent use of the fact that
  $E(\G)\subseteq E(\G')$.  We consider the following cases in order to
  show that all arcs between vertices $x, y\in L(T(u))$ with
  $\sigma(x)\ne\sigma(y)$ exist in $(\G',\sigma)$:
  \begin{description}
    \item[(i)] $x,y\in S_u$,
    \item[(ii)] $x\in L(T(u))\setminus S_u$ and $y\in S_u$, and
    \item[(iii)] $x,y\in L(T(u))\setminus S_u$.
  \end{description}
  In Case~(i), the leaves $x$ and $y$ are both children of $u$. Together with 
  Cor.~\ref{cor:reciprocal-iff-same-parent}, this implies $(x,y),(y,x)\in 
  E(\G)\subseteq E(\G')$.
  
  In Case~(ii), we can find a vertex $x'\in S_u$ of color 
  $\sigma(x)$ since $S_u$ 
  contains vertices of both colors. As in Case~(i), we have 
  $(x',y),(y,x')\in E(\G)\subseteq E(\G')$.
  Since $x\in L(T(u))\setminus S_u$, we can conclude that $v\coloneqq 
  \parent_T(x)\prec_T u$ by the definition of support leaves.
  Hence, the inner vertex $v$ is not the root of $T$ and we can apply Cor.~1 in 
  \cite{Schaller:21x} to conclude that the subtree $T(v)$ of the inner vertex 
  $v$ contains both colors.
  The latter together with Lemma~10 in \cite{Geiss:19b} implies that there are 
  arcs $(x'',y''),(y'',x'')\in 
  E(\G)\subseteq E(\G')$ with $x'',y''\in L(T(v))$ and 
  $\sigma(x)=\sigma(x'')\ne \sigma(y)=\sigma(y'')$. Note that $x=x''$ is 
  possible.
  Since $x,x'',y''$ in $L(T(v))\subset L(T(u))$, $x',y\in L(T(u))\setminus 
  L(T(v))$ and $v\prec_T u$, we can apply 
  Lemma~\ref{lem:2-BMG-outneighbors-subtree} to conclude that
  $(x',y''),(y,x),(y,x'')\in E(\G)\subseteq E(\G')$ and 
  $(y'',x'),(x,y),(x'',y)\notin E(\G)\subseteq E(\G')$.
  Together with $(x',y),(y,x'),(x'',y''),(y'',x'')\in E(\G)$ and the coloring, 
  this implies that $x',y,x'',y''$ induce an hourglass $[x'y\hourglass x''y'']$ 
  in $(\G,\sigma)$.
  By Lemma~\ref{lem:fill-houglass}, we have arcs $(x'',y),(y'',x')\in E(\G')$.
  If $x=x''$, we immediately obtain $(x,y), (y,x)\in E(\G')$.
  Now suppose $x\ne x''$, i.e., it remains to show that $(x,y)\in E(\G')$.
  Thus assume, for contradiction, that $(x,y)\notin E(\G')$.
  Lemma~\ref{lem:2-BMG-outneighbors-subtree} together with 
  $\sigma(x)\ne\sigma(y'')$ and $y''\in L(T(\parent_T(x)=v))$ implies that 
  $(x,y'')\in E(\G)\subseteq E(\G')$.
  Hence, we have the arcs $(x,y''), (y'',x'), (x',y)\in E(\G')$ but 
  $(x,y)\notin 
  E(\G')$, i.e., $x,x',y,y''$ induce a forbidden F2-graph. Together with 
  Prop.~\ref{prop:forb-subgraph-charac}, this is a contradiction to 
  $(\G',\sigma)$ being a 2-BMG. Therefore, we conclude that $(x,y)\in E(\G')$.
  
  In Case~(iii), we have $x,y\in L(T(u))\setminus S_u$. We can find two 
  vertices $x',y'\in S_u$, which are distinct from $x$ and $y$ and satisfy 
  $\sigma(x)=\sigma(x')\ne\sigma(y)=\sigma(y')$.
  From Cases~(i) and~(ii), we obtain $(x',y'),(y',x')\in E(\G')$ and 
  $(x',y),(y,x'),(x,y'),(y',x)\in E(\G')$, respectively.
  Now assume for contradiction that $(x,y)\notin E(\G')$.
  Thus, we have $(x,y'),(y',x'),(x',y)\in E(\G')$ and $(x,y)\notin 
  E(\G')$, i.e., $x,x',y,y'$ induce a forbidden F2-graph in $(\G',\sigma)$;
  a contradiction to $(\G',\sigma)$ being a 2-BMG.
  Hence, we conclude that $(x,y)\in E(\G')$. The existence of the arc 
  $(y,x)\in E(\G')$ can be shown by analogous arguments.
  
  We will now show that $E(\G^*)\subseteq E(\G')$ for every (not
  necessarily optimal) completion $(\G',\sigma)$ of the 2-BMG $(\G,\sigma)$
  to a beBMG.  To this end, consider an arbitrary arc $(x,y)\in E(\G^*)$.
  If $(x,y)\in E(\G)$, then $(x,y)\in E(\G')$ follows immediately.  Now
  assume that $(x,y)\in F=E(\G^*)\setminus E(\G)$.  Since $(\G,\sigma)$ is
  a 2-BMG and thus properly-colored and sink-free (cf.\
  Prop.~\ref{prop:forb-subgraph-charac}), there must be a vertex $y'$ of
  color $\sigma(y)$ such that $(x,y')\in E(\G)$. Since $(x,y)\notin E(\G)$,
  we have $\lca_T(x,y')\prec_T \lca_T(x,y)$ and thus the LRT $(T,\sigma)$
  displays the triple $xy'|y$.  However, $(x,y), (x,y')\in E(\G^*)$ implies
  that $(T^*,\sigma)$ does not display the triple $xy'|y$, i.e., all edges
  on the path from $\lca_{T}(x,y')$ to $\lca_{T}(x,y)$ have been
  contracted.  Therefore, there is a $\preceq_{T}$-maximal inner vertex
  $u\in V^0(T)$ such that $x,y\in L(T(u))$, $T(u)$ is not a star tree and $u$ 
  has support
  leaves of both colors in $T$.  By the arguments above, we can conclude
  that $(x,y)\in E(\G')$.
  
  In summary, $F$ is a solution for \PROBLEM{2-BMG CBEG} with the 2-BMG
  $(\G,\sigma)$ (and some integer $k\ge|F|$) as input, and $F\subseteq F'$
  for every other solution $F'= E(\G')\setminus E(\G)$.  Therefore, we
  conclude that $F$ is the unique optimal solution.
\end{proof}

As a direct cosequence of Thm.~\ref{thm:2BMG-CBEG-BMG-input}, the fact that
LRTs can be constructed in $O(|V| + |E|\log^2 |V|)$ (cf.\
\cite{Schaller:21x}) and Lemma~\ref{lem:collaps}, we have
\begin{Corollary}
  \PROBLEM{2-BMG CBEG} with a 2-BMG as input can be solved in
  $O(|V| + |E|\log^2 |V|)$ time. 
\end{Corollary}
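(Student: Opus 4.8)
The plan is to turn Theorem~\ref{thm:2BMG-CBEG-BMG-input} into an explicit algorithm and bound each of its steps using results already established. By that theorem, the unique optimal completion of a 2-BMG $(\G,\sigma)$ is $F = E(\G(T^*,\sigma)) \setminus E(\G)$, where $(T^*,\sigma)$ is the collapsed tree of the LRT $(T,\sigma)$ of $(\G,\sigma)$; in particular, a completion of size at most $k$ exists if and only if $|F|\le k$. Solving \PROBLEM{2-BMG CBEG} therefore amounts to three steps: (i) constructing the LRT $(T,\sigma)$; (ii) contracting it to $(T^*,\sigma)$; and (iii) determining $|F|$ and comparing it with $k$. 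Step~(i) uses the LRT construction of \cite{Schaller:21x}, which runs in $O(|V| + |E|\log^2|V|)$ time. Step~(ii) is handled by Lemma~\ref{lem:collaps} in $O(|V(T)|)$ time; since $(T,\sigma)$ is phylogenetic with $|L(T)|=|V|$ leaves, we have $|V(T)|=O(|V|)$, so this step is linear.

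The delicate point is step~(iii). The set $F$ may contain $\Omega(|V|^2)$ arcs --- a subtree collapsed to a star on $r$ leaves of one color and $b$ of the other contributes $2rb$ reciprocal arcs --- so explicitly constructing $\G(T^*,\sigma)$ would already violate the target bound. I would instead compute only the cardinality $|F| = |E(\G(T^*,\sigma))| - |E(\G)|$, using that $E(\G)\subseteq E(\G(T^*,\sigma))$ (established in the proof of Theorem~\ref{thm:2BMG-CBEG-BMG-input}) and that $|E(\G)|$ is part of the input. Applying Lemma~\ref{lem:2-BMG-outneighbors-subtree} to the 2-BMG $\G(T^*,\sigma)$ and its LRT $(T^*,\sigma)$ (which is least resolved by Thm.~\ref{thm:edge-contr-supergraph}), the out-degree of a leaf $x$ equals the number of oppositely-colored leaves in $L(T^*(\parent_{T^*}(x)))$. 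A single postorder traversal of $T^*$ records, for every vertex $w$, the numbers of leaves of each color in $L(T^*(w))$; summing the relevant count over all leaves then yields $|E(\G(T^*,\sigma))|$, and hence $|F|$, in $O(|V(T^*)|)=O(|V|)$ time.

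Adding the three contributions, the running time is dominated by the LRT construction and equals $O(|V| + |E|\log^2|V|)$. I expect the only genuine obstacle to be the bookkeeping in step~(iii): the result follows almost immediately from the cited facts, and the sole risk is inadvertently materializing the (possibly quadratic-size) completion $\G(T^*,\sigma)$ instead of counting its arcs directly from the subtree color profiles of $T^*$, which is what keeps the running time within the target bound.
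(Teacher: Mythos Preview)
Your argument is correct and follows the paper's approach exactly: the paper simply records the corollary as a direct consequence of Theorem~\ref{thm:2BMG-CBEG-BMG-input}, the $O(|V|+|E|\log^2|V|)$ LRT construction from \cite{Schaller:21x}, and Lemma~\ref{lem:collaps}, without further elaboration. Your step~(iii)---computing $|F|$ from subtree color counts in $T^*$ via Lemma~\ref{lem:2-BMG-outneighbors-subtree} rather than materializing the possibly quadratic-size graph $\G(T^*,\sigma)$---is a detail the paper leaves implicit, and it is exactly the right observation to make the stated bound rigorous for the decision problem.
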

We also immediately obtain a characterization of the LRTs of 2-beBMGs.
\begin{Corollary}
  A 2-colored least resolved tree $(T,\sigma)$ is the LRT of 2-beBMG if and
  only if it is a collapsed tree. 
\end{Corollary}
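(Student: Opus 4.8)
The plan is to prove both implications of the equivalence, after first reducing ``being a collapsed tree'' to the fixed-point condition $T=T^{*}$. Indeed, if $(T,\sigma)$ equals the collapsed tree of some tree $(T',\sigma)$, then the idempotence $T^{**}=T^{*}$ (the Corollary following Lemma~\ref{lem:collaps}) gives $T^{*}=T'^{**}=T'^{*}=T$; conversely, if $T=T^{*}$ then $(T,\sigma)$ is trivially the collapsed tree of itself. Throughout, $(T,\sigma)$ is assumed to be a $2$-colored least resolved tree, so I may freely use that $(T,\sigma)$ has no redundant edges.

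For the ``if'' direction, suppose $T=T^{*}$. Applying Lemma~\ref{lem:T*be} to the LRT $(T,\sigma)$, whose collapsed tree is $T^{*}=T$, shows that $\G(T,\sigma)$ is binary-explainable. Since $\G(T,\sigma)$ is a $2$-colored BMG, it is a $2$-beBMG, and by Prop.~\ref{prop:LRTuniq} its unique LRT is the (redundant-edge-free) tree $(T,\sigma)$ itself. Hence $(T,\sigma)$ is the LRT of a $2$-beBMG.

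For the ``only if'' direction, suppose $(T,\sigma)$ is the LRT of a $2$-beBMG $(\G,\sigma)$, so $\G(T,\sigma)=\G$ is binary-explainable, and assume for contradiction that $T\ne T^{*}$. By the construction of the collapsed tree, at least one inner edge is contracted, so some inner vertex $u$ has support leaves of both colors while $T(u)$ is not a star; equivalently, $u$ has an inner child $w$. Choose support leaves $a,b\in S_u$ with $\sigma(a)=r\ne s=\sigma(b)$. As $w\prec_T u$ is a non-root inner vertex of the LRT of a $2$-BMG, Cor.~1 in \cite{Schaller:21x} yields $\{r,s\}\subseteq\sigma(L(T(w)))$. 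Then the three distinct children $v_1\coloneqq a$, $v_2\coloneqq w$, $v_3\coloneqq b$ of $u$, together with $r$ and $s$, satisfy conditions (a) and (b) of Prop.~\ref{prop:binary-iff-subtree-colors}(3), contradicting that $(\G,\sigma)$ is binary-explainable. Therefore $T=T^{*}$. (An alternative route derives $\G(T^{*},\sigma)=\G(T,\sigma)$ from the uniqueness of the optimal completion in Thm.~\ref{thm:2BMG-CBEG-BMG-input} -- which must be empty because $\G$ is already a beBMG -- and then invokes the strict-inclusion Corollary following Thm.~\ref{thm:edge-contr-supergraph} to rule out any contracted edge.)

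The ``if'' direction is essentially immediate from Lemma~\ref{lem:T*be}. The step requiring the most care is the ``only if'' direction, which is the converse of Lemma~\ref{lem:T*be}: I expect the main work to lie in extracting, from the mere failure of $T=T^{*}$, a single vertex $u$ with three children that witness the forbidden configuration of Prop.~\ref{prop:binary-iff-subtree-colors}(3). In particular, one must check that $v_1,v_2,v_3$ are pairwise distinct (automatic, since $a,b$ are leaves of different colors and $w$ is inner) and that the inner child $w$ genuinely contributes both colors via Cor.~1 in \cite{Schaller:21x}.
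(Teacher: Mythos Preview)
Your proof is correct. Both directions are handled cleanly, and your reduction of ``being a collapsed tree'' to the fixed-point condition $T=T^{*}$ via idempotence is exactly right.

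The paper derives the corollary as an immediate consequence of Thm.~\ref{thm:2BMG-CBEG-BMG-input}: since a beBMG is its own optimal completion, the solution set $F=E(\G(T^{*},\sigma))\setminus E(\G)$ must be empty, whence $\G(T^{*},\sigma)=\G(T,\sigma)$, and the strict-inclusion corollary after Thm.~\ref{thm:edge-contr-supergraph} forces $T^{*}=T$. This is precisely the alternative route you sketch parenthetically. Your primary argument for the ``only if'' direction is genuinely different and more self-contained: rather than routing through the completion theorem, you read off directly from $T\ne T^{*}$ an inner vertex $u$ with two differently colored support leaves and an inner child $w$, and then invoke Cor.~1 of \cite{Schaller:21x} to see that $L(T(w))$ contains both colors, yielding the forbidden configuration of Prop.~\ref{prop:binary-iff-subtree-colors}(3). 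This avoids the machinery of Thm.~\ref{thm:2BMG-CBEG-BMG-input} entirely and is in effect the converse of the argument inside the proof of Lemma~\ref{lem:T*be}. The paper's route is shorter once Thm.~\ref{thm:2BMG-CBEG-BMG-input} is in hand; your route is more elementary and makes the structural reason for the equivalence transparent.
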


\section{Concluding Remarks} 

Starting from the observation that the property of being least resolved is
preserved under contraction of inner edges, we have obtained a
characterization of the LRTs that explain 2-colored beBMGs. The
construction of these ``collapsed trees'' corresponds to the completion of
BMGs to beBMGs, resulting in a simple, polynomial-time algorithm for this
problem.

\begin{figure}[htb]
  \centering
  \includegraphics[width=0.85\textwidth]{./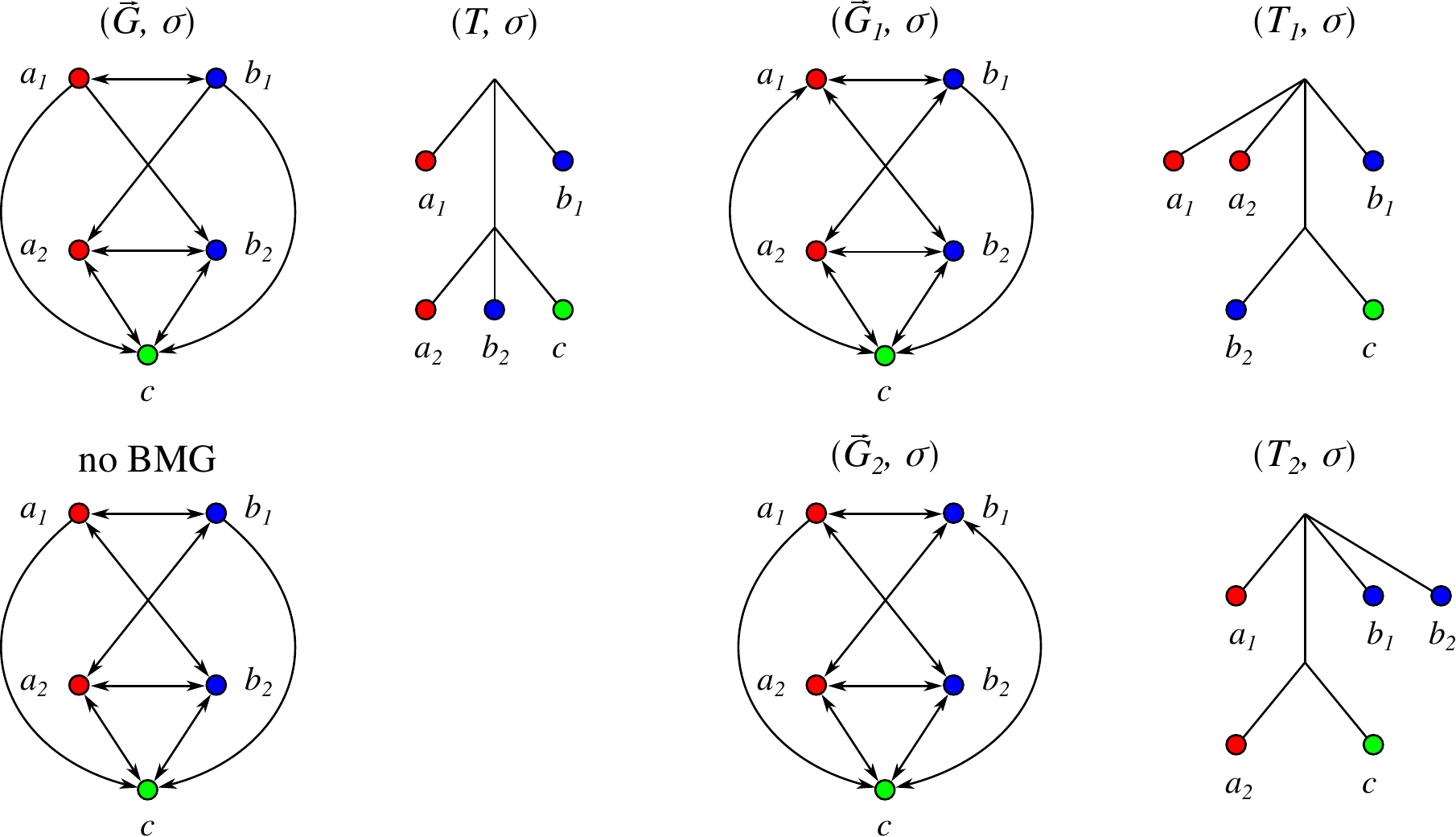}
  \caption{Example for \PROBLEM{3-BMG CBEG} with the 3-BMG $(\G,\sigma)$
    (explained by the LRT $(T,\sigma)$) as input that has no unique optimal
    solution. Insertion of the missing arcs $(a_2,b_1)$ and $(b_2,a_1)$
    produces a graph that is not a BMG. At least one of the arcs $(c,a_1)$
    or $(c,b_1)$ has to be inserted additionally to obtain the beBMGs
    $(\G_1,\sigma)$ and $(\G_2,\sigma)$ (shown with their LRTs
    $(T_1,\sigma)$ and $(T_2,\sigma)$), respectively.}
  \label{fig:CBEG-3-colors}
\end{figure}

In contrast to the 2-colored case, \PROBLEM{$\ell$-BMG CBEG} with a BMG as
input and $\ell\ge 3$ in general does not have a unique optimal solution.
In the example in Fig.~\ref{fig:CBEG-3-colors}, the missing arcs
$(a_2,b_1)$ and $(b_2,a_1)$ in the induced hourglass
$[a_1 b_1 \hourglass a_2 b_2]$ must be inserted. The resulting graph is not
a BMG. To obtain a BMG, it suffices to insert in addition either the 
arc $(c,a_1)$ or the arc $(c,b_1)$ to obtain a beBMG.
(cf. Prop.~\ref{prop:binary-iff-subtree-colors}).

The simple solution of \PROBLEM{$2$-BMG CBEG} begs the question whether
other arc modification problems for beBMGs, in particular the corresponding
deletion and editing problems, have a similar structure. This does not seem
to be case, however. Neither \PROBLEM{$2$-BMG EBEG} nor \PROBLEM{$2$-BMG
  DBEG} with a $2$-BMG as input have a unique optimal solution.  To see
this, consider the 2-BMG consisting of the hourglass $[xy\hourglass x'y']$
which is explained by the unique non-binary tree $(x,y,(x',y'))$ (in Newick
format, see also Fig.~\ref{fig:forb-sg-and-hourglass.pdf}(A)). Deletion of
the arcs $(x,y)$ or $(y,x)$ results in a graph that is explained by the
binary trees $(y,(x,(x',y')))$ or $(x,(y,(x',y')))$, respectively. We
suspect that a BMG as input does not make these problems easier than the
general case -- the complexity of which remains an open questions, however.

\paragraph{Author Contributions}
Conceptualization, methodology, formal analysis, and
writing, D.S., M.G., M.H., and P.F.S. All authors have read and agreed to
the published version of the manuscript.

\paragraph{Funding}
This research was funded in part by the German Research Foundation
(DFG), the Austrian Federal Ministries  BMK and BMDW and the Province of Upper 
Austria in the frame of the COMET Programme managed by FFG.

\paragraph{Conflicts of Interest}
The authors declare no conflict of interest.

\bibliography{preprint-BMG-to-beBMG}


\end{document}